\numberwithin{equation}{section}
\newtheorem{theorem}{Theorem}[section]
\newtheorem{proposition}[theorem]{Proposition}
\theoremstyle{definition}
\newtheorem{example}[theorem]{Example}
\newcommand{\Z}{{\mathbb Z}}
\newcommand{\R}{{\mathbb R}}
\newcommand{\C}{{\mathbb C}}
\newcommand{\ok}{{\rm{\bf k}}}
\newcommand{\OK}{{\rm{\bf K}}}
\newcommand{\am}{{\rm{\bf a}}^{\!-} }
\newcommand{\ap}{{\rm{\bf a}}^{\!+} }
\newcommand{\apm}{{\rm{\bf a}}^{\!\pm} }
\newcommand{\amp}{{\rm{\bf a}}^{\!\mp} }
\newcommand{\Am}{{\rm{\bf A}}^{\!-} }
\newcommand{\Ap}{{\rm{\bf A}}^{\!+} }
\newcommand{\ichi}{{\bf 1} }
\begin{document}

\title[3D reflection equation]{
A polynomial formula for the solution of \\
3D reflection equation}

\author{Atsuo Kuniba}
\email{atsuo@gokutan.c.u-tokyo.ac.jp}
\address{Institute of Physics, University of Tokyo, Komaba, Tokyo 153-8902, Japan}

\author{Shouya Maruyama}
\email{maruyama@gokutan.c.u-tokyo.ac.jp}
\address{Institute of Physics, University of Tokyo, Komaba, Tokyo 153-8902, Japan}

\maketitle

\begin{center}{\bf Abstract}
\end{center}

We introduce a family of polynomials in $q^2$ and four variables associated with 
the quantized algebra of functions $A_q(C_2)$.
A new formula is presented for the recent solution of the 3D reflection equation
in terms of these polynomials specialized to the eigenvalues of 
the $q$-oscillator operators.
 
\vspace{0.4cm}

\section{Introduction}\label{sec:intro}

In quantum integrable systems in 3 dimension (3D),
an important role is played by 
the Zamolodchikov tetrahedron equation \cite{Zam80}
and the Isaev-Kulish 3D reflection equation \cite{IK}:
\begin{align}
{R}_{356}{R}_{246}{R}_{145}{R}_{123}
&={R}_{123}{R}_{145}{R}_{246}{R}_{356}, \label{a:teq}\\
{R}_{456}{R}_{489}
{K}_{3579}{R}_{269}{R}_{258}
{K}_{1678}{K}_{1234}
&={K}_{1234}
{K}_{1678}
{R}_{258}{R}_{269}
{K}_{3579}{R}_{489}{R}_{456}.\label{a:3dref}
\end{align}
They are equalities among linear operators acting on the 
tensor product of 6 and 9 vector spaces, respectively.
The indices specify the components in the tensor product 
on which the operators $R$ and $K$ act nontrivially.
They serve as a 3D analogue of the Yang-Baxter \cite{Bax} 
and the reflection equations 
postulating certain factorization conditions on straight strings
which undergo the scattering $R$ and the reflection $K$ by a boundary plane.
We call the scattering and the reflection operators 
as 3D $R$ and 3D $K$ for short.

The first nontrivial solution $K$ to the 3D reflection equation (\ref{a:3dref})
was constructed in \cite{KO1} based on the representation theory \cite{So2}
of the quantized algebra of functions $A_q(C_n)$ \cite{RTF}.
It is essentially obtained as the intertwiner 
of the two equivalent irreducible $A_q(C_2)$ modules
${\mathcal F}_{q^2}\otimes{\mathcal F}_{q}\otimes
{\mathcal F}_{q^2}\otimes{\mathcal F}_{q}
\simeq {\mathcal F}_{q}\otimes
{\mathcal F}_{q^2}\otimes{\mathcal F}_{q}\otimes
{\mathcal F}_{q^2}$,
where ${\mathcal F}_{q}, {\mathcal F}_{q^2}$ are the Fock spaces of the 
$q$-oscillators 
$\langle {{\rm{\bf a}}^{\!\pm} }, \ok\rangle,
\langle {{\rm{\bf A}}^{\!\pm} }, \OK\rangle$.
See (\ref{akf}). 
The $K$ is also characterized as the transition matrix of the 
Poincar\'e-Birkhoff-Witt (PBW) bases of the positive part of 
$U_q(C_2)$ \cite{KOY}. See (\ref{eke}).
Matrix elements of the $K$ are polynomials in $q$
whose $q\rightarrow 0$ limits are still known to yield a decent 
set-theoretical solution to the 3D reflection equation \cite{KO1}.
So far a general formula for the $K$ is only available  
in \cite[Th.3.4]{KO1},
which consists of sums of ratios of many $q$-factorials.
See (\ref{mess}). 

The status contrasts with the relevant 3D $R$, 
the companion object in (\ref{a:3dref}), 
for which a number of results have been established.
It was originally obtained as the intertwiner of the 
quantized algebra of functions $A_q(A_2)$ \cite{KV} and found 
later also in a quantum geometry consideration \cite{BS}.
They were shown to be the same object 
in \cite{KO1}.
Several formulas are available including 
\cite[eq.(30)]{BS}, \cite[eq.(58)]{BMS}, 
\cite[eq.(2.20)]{KO1} (a correction of the misprint in \cite[p194]{KV}),
\cite[eq. (104)]{S}  
and  \cite[eq.(9)]{KO3}.

The purpose of this paper is to provide the relatively intact 
3D $K$ with the new explicit formulas 
(\ref{Kq}) and (\ref{Kqq}),  
which are more structural than the previous one in \cite{KO1}.
We introduce a family of polynomials 
$\{Q_{b,c}(x,y,z,w)\mid (b,c) \in \Z_{\ge 0}^2\}$ with 
coefficients in $\Z[q^2]$ that are
characterized by a system of $q$-difference equations. 
The equations are polynomial forms of the 
intertwining relations for the 3D $K$ where the 
four variables $x,y,z,w$ correspond to the four positive roots of $C_2$.
See (\ref{proot}).
The elements of the $K$ are expressed as a specialization of the 
polynomials to the eigenvalues of the $q$-oscillator operators 
$\ok$ and $\OK$.
See Example \ref{ex:k}.

Our new formula may be viewed as a generalization of the 
analogous result on the 3D $R$ in \cite{BS,BMS}.
It is still a cumbersome expression reflecting a significantly more involved 
nature of the $K$ compared with the $R$.
However extracting the polynomial structure implies an 
``analytic continuation" of the eigenvalues of $\ok, \OK$ to generic variables, 
which is an important step toward 
a possible extension to the modular double setting. 
A more account on this will be given in Section \ref{sec:sum}.
See also \cite{BMS,KOS} and reference therein for the generalization and application
of the 3D $R$ associated with the modular double. 
We hope to report on this issue elsewhere.

In Section \ref{sec:ori} 
the origin of the 3D $K$ is recalled based on \cite{KO1} and \cite{KOY}.
In Section \ref{sec:pf}  
the polynomials $Q_{b,c}(x,y,z,w)$ are introduced 
and some basic properties are established.
The new formulas for the $K$ are presented with a proof.
In Section \ref{sec:R} a review of the 
analogous result on the 3D $R$ \cite{BS,BMS} is given for comparison.
It also includes supplementary $q$-difference equations, 
an example of the proof of the integral formula 
\cite[eq. (104)]{S} and a derivation of the new formula 
for the matrix elements of the 3D $R$ \cite[eq. (9)]{KO3}.
Section \ref{sec:sum} contains a summary and an outlook.
Appendix \ref{app:de} lists the intertwining relations of 
the $K$ and the corresponding 
$q$-difference equations of $Q_{b,c}(x,y,z,w)$.

We assume that $q$ is generic and use the notation
\begin{align}
&(z;q)_n = \frac{(z;q)_\infty}{(zq^{n};q)_\infty},\quad
(q)_n=(q;q)_n,\quad 
\binom{r_1,\ldots, r_m}{s_1,\ldots, s_n}_{\!\!q} = 
\frac{\prod_{i=1}^m(q)_{r_i}}{\prod_{i=1}^n(q)_{s_i}}\label{cbdef}
\end{align}
in terms of $(z;q)_\infty = \prod_{j\ge0}(1-zq^j)$.
They will only be used for integer indices,
therefore the assumption $|q| <1$ is not necessary, despite the 
formal appearance of the infinite product.
The last symbol will be used without assuming 
$\sum_{i=1}^m r_i = \sum_{i=1}^ns_i$.
An important consequence of these definitions is the support property:
$1/(q)_n = 0$ for $n \in \Z_{<0}$. 
Thus  
$\binom{r_1,\ldots, r_m}{s_1,\ldots, s_n}_{\!\!q}=0$
if $\{r_1,\ldots, r_m\} \cap \Z_{< 0} = \emptyset$ and 
$\{s_1,\ldots, s_n\} \cap \Z_{<0}\neq \emptyset$.

\section{Origin of 3D $K$}\label{sec:ori}

\subsection{$K$ as the intertwiner of $A_q(C_2)$ modules}\label{ss:AK}

We recall the definition of  $A_q(C_n)$ \cite{RTF}, 
where it was denoted by 
$\mathrm{Fun(Sp}_q(n))$.
First introduce the structure constants 
$(R_{ij, kl})_{1\le i,j,k,l \le 2n}$ and 
$C=-C^{-1}=(C_{ij})_{1 \le i,j \le 2n}$ by
\begin{align*}
\begin{split}
&\sum_{i,j,k,l}R_{ij,kl}E_{ik}\otimes E_{jl}=
q\sum_i E_{ii}\otimes E_{ii} + \sum_{i\neq j, j'}E_{ii}\otimes E_{jj}
+q^{-1}\sum_i E_{ii}\otimes E_{i'i'}\\
&\qquad\qquad\qquad+(q-q^{-1})\sum_{i>j}E_{ij}\otimes E_{ji}
-(q-q^{-1})
\sum_{i>j}\epsilon_i\epsilon_jq^{\varrho_i-\varrho_j}E_{ij}\otimes E_{i'j'},
\end{split}\\
&C_{ij}= \delta^i_{j'}\epsilon_i q^{\varrho_j},\quad i' = 2n+1-i,
\quad
\epsilon_i = 1\,(1\le i \le n),\;\;\epsilon_i = -1\,(n<i \le 2n),\\
&(\varrho_1,\ldots, \varrho_{2n}) = (n,n-1,\ldots,1,-1,\ldots, -n+1,-n).
\end{align*}
Here $E_{ij}$ is a matrix unit and the indices are summed over 
$\{1,2,\ldots, 2n\}$ under the specified conditions.
The 
$\sum R_{ij,kl}E_{ik}\otimes E_{jl}$ is a limit of the quantum $R$ matrix for the 
vector representation of $U_q(C_n)$ given in \cite[eq.(3.6)]{J2}.
 
The quantized algebra of functions $A_q(C_n)$
is a Hopf algebra generated by $T = (t_{ij})_{1\le i,j\le 2n}$ 
with the relations symbolically expressed as
$R(T\otimes T) = (T\otimes T)R$ and 
$TCT^tC^{-1} = CT^tC^{-1}T = I$. 
Explicitly they read
\begin{align*}
\sum_{m,p}R_{ij,mp}t_{mk}t_{pl} 
= \sum_{m,p}t_{jp}t_{im}R_{mp,kl},\quad
\sum_{jkl}C_{jk}C_{lm}t_{ij}t_{lk} 
= \sum_{jkl}C_{ij}C_{kl}t_{kj}t_{lm} = -\delta^i_{m},
\end{align*}
where $\delta^i_m=1$ if $i=m$ and $0$ otherwise.
The coproduct is given by 
$\Delta(t_{ij}) = \sum_k t_{ik}\otimes t_{kj}$.
We will use the symbol $\Delta$ to also mean
the multiple coproducts like
$(\Delta\otimes 1) \circ \Delta = (1\otimes \Delta) \circ \Delta$, etc.

To describe the representations of $A_q(C_2)$, we introduce the Fock space
${\mathcal F}_q = \oplus_{m\ge 0}\C(q) |m\rangle$ 
equipped with the $q$-oscillator operators $\ap, \am, \ok$ acting as
\begin{align}\label{akf}
{\rm{\bf k}}|m\rangle = q^m |m\rangle,\;
{\rm{\bf a}}^+|m\rangle = |m+1\rangle,\;
{\rm{\bf a}}^-|m\rangle = (1-q^{2m})|m-1\rangle.
\end{align}
Let ${\mathcal F}_{q^2}, \Ap, \Am$ and  $\OK$ denote 
the corresponding $q$-oscillator operators with $q$ replaced by $q^2$.  
Thus $\OK|m\rangle=q^{2m}|m\rangle,\,
\Ap|m\rangle = |m+1\rangle,\,
\Am|m\rangle = (1-q^{4m})|m-1\rangle$.

Now we consider the $n=2$ case.
Then for $T=(t_{ij})_{1\le i,j \le 4}$ the maps
\begin{align}
&\pi_1(T)= 
\begin{pmatrix}
\mu_1\am & \alpha_1\ok & 0& 0\\
\beta_1\ok& \nu_1\ap  & 0& 0\\
0 & 0 & \nu_1^{-1}\am & q\beta_1^{-1}\ok \\
0 & 0 & q\alpha_1^{-1}\ok & \mu_1^{-1}\ap
\end{pmatrix}\quad
(-q^{-1}\alpha_1\beta_1 = \mu_1\nu_1=\epsilon=\pm 1),\label{pi31}\\
&\pi_2(T)= 
\begin{pmatrix}
\rho{\bf 1}  & 0 & 0 & 0\\
0 & \mu_2\Am & \alpha_2 \OK & 0\\
0 & \beta_2 \OK & \mu_2^{-1}\Ap & 0\\
0 & 0 & 0  & \rho^{-1}{\bf 1}\\
\end{pmatrix}\quad(\rho=\pm 1,\; \alpha_2 \beta_2 = -q^2)\label{pi33}
\end{align}
give the irreducible representations 
$\pi_i: A_q(C_2) \rightarrow \mathrm{End}({\mathcal F}_{q^i})$ \cite{KO1}.
Here ${\bf 1}$ denotes the identity operator on 
${\mathcal F}_{q^2}$.
The parameters $\alpha_i, \beta_i, \mu_i, \nu_i$ 
are to obey the constraints in the parentheses.
According to \cite{So2}, irreducible representations of $A_q(C_2)$ 
are labeled by the elements of the Weyl group $W(C_2)$ 
up to a torus degree of freedom.
The $W(C_2)$ is a Coxeter system generated by the simple reflections
$s_1$ and $s_2$ with the relations
$s_1^2=s_2^2=1,\; s_1s_2s_1s_2 = s_2s_1s_2s_1$.
(We employ the convention such that the indices 1 and 2 correspond to the 
short and the long simple root of $C_2$, respectively.)
The $\pi_i$ is the irreducible $A_q(C_2)$ module corresponding to $s_i$.
The irreducible representation for 
$s_is_js_is_j$  is given by $\pi_i\otimes \pi_j\otimes \pi_i\otimes \pi_j
\,(i\neq j)$.
We write such a tensor product representation as
$\pi_{ijij}$ for short.
The relation $s_1s_2s_1s_2 = s_2s_1s_2s_1$ implies the equivalence 
$\pi_{1212}\simeq \pi_{2121}$.
Therefore there is a unique map
\begin{align*}
\Psi : 
\mathcal{F}_{q} \otimes \mathcal{F}_{q^2}\otimes 
\mathcal{F}_{q}\otimes \mathcal{F}_{q^2} \longrightarrow
\mathcal{F}_{q^2} \otimes \mathcal{F}_{q}
\otimes \mathcal{F}_{q^2}\otimes \mathcal{F}_{q}
\end{align*}
characterized by the intertwining relations and the normalization:
\begin{align}
&\pi_{2121}(\Delta(f))\circ \Psi = \Psi \circ \pi_{1212}(\Delta(f))
\quad (\forall f \in A_q(C_2)),\label{pip}\\
&\Psi (|0\rangle \otimes|0\rangle \otimes|0\rangle \otimes|0\rangle) 
=|0\rangle \otimes|0\rangle \otimes|0\rangle \otimes|0\rangle. \nonumber
\end{align}
We find it convenient to work with $\mathscr{K}$ defined by
\begin{align*}
\mathscr{K} = \Psi  P_{1234} : 
\; \mathcal{F}_{q^2} \otimes \mathcal{F}_{q}\otimes 
\mathcal{F}_{q^2}\otimes \mathcal{F}_{q}
\longrightarrow 
\mathcal{F}_{q^2} \otimes \mathcal{F}_{q}\otimes 
\mathcal{F}_{q^2}\otimes \mathcal{F}_{q},
\end{align*}
where 
$P_{1234} : x_1 \otimes x_2 \otimes x_3 \otimes x_4 \mapsto
x_4 \otimes x_3 \otimes x_2 \otimes x_1$ is the linear operator 
reversing the order of the tensor product.
The intertwining relation (\ref{pip}) is translated into
\begin{align}
&\pi_{2121}(\Delta(f))\circ \mathscr{K} 
= \mathscr{K} \circ \pi_{2121}(\tilde{\Delta}(f))
\quad (\forall f \in A_q(C_2)),\label{reK}
\end{align}
where $\tilde{\Delta}(t_{ij}) = \sum_{l_1,l_2,l_3} 
t_{l_3 j}\otimes t_{l_2 l_3} \otimes t_{l_1l_2} \otimes t_{i l_1}$.

Introduce the matrix elements of $\mathscr{K}$ by 
$\mathscr{K}(|i\rangle \otimes |j\rangle \otimes |k\rangle \otimes |l\rangle) = 
\sum_{a,b,c,d} \mathscr{K}^{a, b, c, d}_{i, j, k, l}
|a\rangle \otimes |b\rangle \otimes |c\rangle \otimes |d\rangle$.
We set 
$K^{a,b,c,d}_{i,j,k,l}=\epsilon^{b+l}\mu_1^{2k-2c}
(\rho \mu_2)^{j-b}\mathscr{K}^{a,b,c,d}_{i,j,k,l}$ 
by using the parameters in (\ref{pi31}) and (\ref{pi33}).
Then it turns out that 
$K^{a,b,c,d}_{i,j,k,l}$ is a polynomial in $q$ free from all the other parameters
\cite{KO1}.
More precisely, 
$K^{a,b,c,d}_{i,j,k,l} \in q^\eta \Z[q^2]$ holds, where 
$\eta=0,1$ is specified by $\eta \equiv bd+jl \mod 2$. 
It satisfies \cite[eq.(3.25)]{KO1} which in the present notation reads
\begin{align}\label{kud}
K^{a,b,c,d}_{i,j,k,l} = \binom{j,l}{b,d}_{\!\!q^2}
\binom{i,k}{a,c}_{\!\!q^4}K^{i,j,k,l}_{a,b,c,d}.
\end{align}
We introduce the (parameter-free) 3D reflection operator 
$K \in \mathrm{End}(\mathcal{F}_{q^2} \otimes \mathcal{F}_{q}\otimes 
\mathcal{F}_{q^2}\otimes \mathcal{F}_{q})$\footnote{
We warn that $K$ and $\mathscr{K}$ here 
are denoted by $\mathscr{K}$ and $K$, respectively in \cite{KO1}.} by
\begin{align}\label{Kact}
K(|i\rangle \otimes |j\rangle \otimes |k\rangle \otimes |l\rangle) = 
\sum_{a,b,c,d} K^{a, b, c, d}_{i, j, k, l}
|a\rangle \otimes |b\rangle \otimes |c\rangle \otimes |d\rangle.
\end{align}

Appendix \ref{app:de} lists the relation (\ref{reK}) for $K$ 
with $f=t_{i-1,j-1}$ as $\langle\text{ij}\rangle$. 
The 3D reflection equation (\ref{a:3dref}) follows from the
equivalence 
$\pi_{123212323} \simeq \pi_{323212321}$ 
of the irreducible representations for $A_q(C_3)$ 
reflecting the two reduced expressions 
$s_1s_2s_3s_2s_1s_2s_3s_2s_3 
= s_3s_2s_3s_2s_1s_2s_3s_2s_1$ 
of the longest element of 
the Weyl group $W(C_3)$.
Further properties of $K$ are available in \cite{KO1}.

\subsection{$K$ as the transition coefficient of PBW bases}\label{ss:pbw}

Let $U^+_q(C_2)$ be the positive part of the 
quantized universal enveloping algebra of $U_q(C_2)$.
It is an associative algebra generated by $e_1$ and $e_2$ 
obeying the $q$-Serre relations:
\begin{align*}
e_1^3e_2-[3]_qe_1^2e_2e_1+[3]_qe_1e_2e_1^2-e_2e_1^3=0,\quad
e_2^2e_1-[2]_{q^2}e_2e_1e_2+e_1e_2^2=0,
\end{align*}
where $[m]_q=\frac{q^m-q^{-m}}{q-q^{-1}}$.
According to the general theory \cite{L2},
the $U^+_q(C_2)$ admits two natural PBW bases
$\{E_{\bf 1}^{a,b,c,d}\mid (a,b,c,d) \in \Z_{\ge 0}^4\}$ and 
$\{E_{\bf 2}^{a,b,c,d}\mid (a,b,c,d) \in \Z_{\ge 0}^4\}$ 
corresponding to the reduced expressions $s_1s_2s_1s_2$ and 
$s_2s_1s_2s_1$ of the longest element of $W(C_2)$.
Explicitly they read
\begin{equation}\label{proot}
\begin{split}
&E^{a,b,c,d}_{\bf 1} = \frac{(b'_4)^a(b'_3)^b(b'_2)^c(b'_1)^d}
{[a]_q![b]_{q^2}![c]_q![d]_{q^2}!},\quad
E^{a,b,c,d}_{\bf 2} = \frac{(b_1)^a(b_2)^b(b_3)^c(b_4)^d}
{[a]_{q^2}![b]_q![c]_{q^2}![d]_q!},\\
&b_1=e_2,\; b_2=e_1e_2-q^2e_2e_1,\;
b_3=\frac{e_1b_2-b_2e_1}{[2]_q},\; b_4=e_1,\\
&b'_1=e_2,\; b'_2=e_2e_1-q^2e_1e_2,\;
b'_3=\frac{b'_2e_1-e_1b'_2}{[2]_q},\; b'_4=e_1,
\end{split}
\end{equation}
where $[m]_q!= [m]_q[m-1]_q\cdots [1]_q$.
As the special case $\mathfrak{g}=C_2$ of the result 
for general simple Lie algebra $\mathfrak{g}$ \cite{KOY}, 
one has
\begin{align}\label{eke}
E^{a,b,c,d}_{\bf 2} = \sum_{i,j,k,l}
K^{a, b, c, d}_{i, j, k, l}\,E^{l,k,j,i}_{\bf 1}.
\end{align}
This relation in turn characterizes the matrix element $K^{a, b, c, d}_{i, j, k, l}$ 
of the intertwiner $K$ also as
the transition coefficient of the PBW bases.
See \cite{Sa, T} for the recent development on this topic.
Note that the equality of the weight on the both sides implies 
$K^{a,b,c,d}_{i,j,k,l}=0$ unless
$(a+b+c,b+2c+d)=(i+j+k,j+2k+l)$.

\section{Polynomial formula}\label{sec:pf}

\subsection{$K$ in terms of $Q_{b,c}(x,y,z,w)$} 
In \cite[Th. 3.4]{KO1} the matrix element $K^{a,b,c,d}_{i,j,k,l}$
was expressed as a finite sum of the form 
\begin{align}\label{mess}
K^{a,b,c,d}_{i,j,k,l}=\delta^{a+b+c}_{i+j+k}\delta^{b+2c+d}_{j+2k+l}
\sum_{\alpha,\beta, \gamma,\lambda}(-1)^{r_0}q^\Phi
\binom{r_1,\ldots, r_6}{s_1,\ldots, s_{11}}_{\!\!q^2}
\binom{r_7}{s_{12},s_{13}}_{\!\!q^4}.
\end{align}
Here $r_i, s_i$ are at most linear and $\Phi$ is at most quadratic
in $\alpha,\beta, \gamma,\lambda$. 
The indices $a,b,c,d,i,j,k,l$ enter many places 
and their dependence is quite involved.  

Our aim here is to provide an alternative formula which is more structural.
It is expressed in terms of 
{\em polynomials} in the four variables that accommodate the 
eigenvalues of $\OK$ and $\ok$ in the four components of 
$|i \rangle \otimes |j \rangle \otimes |k \rangle \otimes |l \rangle
\in {\mathcal F}_{q^2}\otimes {\mathcal F}_{q}\otimes 
{\mathcal F}_{q^2}\otimes {\mathcal F}_{q}$.

\begin{theorem}\label{th:main1}
(i) There are a family of polynomials 
$\{Q_{b,c}(x,y,z,w)\mid (b,c) \in \Z^2_{\ge 0}\}$ 
in variables $x,y,z,w$ 
characterized by the $q$-difference equations E22--E55 in Appendix \ref{app:de}
and the condition $Q_{0,0}(x,y,z,w)=1$.
(ii) The following formulas are valid for the matrix elements of the 3D $K$:
\begin{align}
&K^{a,b,c,d}_{i,j,k,l} = \delta^{a+b+c}_{i+j+k}\delta^{b+2c+d}_{j+2k+l}
\frac{q^{\phi_K-\varphi_{b,c}}}{(q^2)_b(q^4)_c}
Q_{b,c}(q^{4i},q^{2j},q^{4k},q^{2l}),\label{Kq}\\
&\phi_K = (a-k)(d-j)+(b-l)(c-i)-2(b-j)(c-k),\label{phiQ}\\
&\varphi_{b,c} = 3b(b-1)+2c(3c-2)+8bc .\label{pbc}
\end{align}
\end{theorem}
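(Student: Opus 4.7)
The strategy is to prove parts (i) and (ii) in tandem by running the intertwining relation (\ref{reK}) through the ansatz (\ref{Kq}). The starting point is that the matrix elements $K^{a,b,c,d}_{i,j,k,l}$ are pinned down by (\ref{reK}) applied to $f = t_{ij}$ together with the vacuum normalization; the componentwise versions of these relations are precisely the $\langle \mathrm{ij}\rangle$ equations tabulated in Appendix~\ref{app:de}.

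The first step is to substitute (\ref{Kq}) into each relation $\langle \mathrm{ij}\rangle$ term by term. Every summand contributes a matrix element $K^{a',b',c',d'}_{i',j',k',l'}$ in which $(i',j',k',l')$ differs from $(i,j,k,l)$ by a few $\pm 1$ shifts and $(b',c')$ differs from $(b,c)$ in a controlled way; the weight deltas fix $(a',d')$. Under (\ref{Kq}) each summand becomes $Q_{b',c'}$ evaluated at shifted arguments $(xq^{\pm 4}, yq^{\pm 2}, zq^{\pm 4}, wq^{\pm 2})$ times a shifted phase $q^{\phi_K - \varphi_{b',c'}}/((q^2)_{b'}(q^4)_{c'})$. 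The quadratic forms in (\ref{phiQ})--(\ref{pbc}) are engineered precisely so that after dividing through by a common phase and a common Pochhammer factor, the residual equation becomes a linear $q$-difference relation with \emph{polynomial} coefficients in $(x,y,z,w)$; these residual equations are E22--E55. The derivation is lengthy but mechanical, the principal accounting being that the accumulated powers of $q$ on either side of each relation cancel cleanly.

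Next, I would prove existence and uniqueness of a polynomial solution $\{Q_{b,c}\}$ to E22--E55 with $Q_{0,0}=1$. Ordering the family by, say, $b+2c$ (the long-root component of the weight), I expect one of the equations in E22--E55 to serve as a recursion expressing $Q_{b,c}$ for $(b,c)\neq(0,0)$ as a polynomial combination of $Q_{b',c'}$'s with $b'+2c' < b+2c$. Uniqueness follows because the recursion fixes the values of $Q_{b,c}$ on the grid $\{(q^{4i},q^{2j},q^{4k},q^{2l}) : i,j,k,l \in \Z_{\geq 0}\}$, and since $q$ is generic this grid is infinite in each variable, so any polynomial in $(x,y,z,w)$ vanishing on it is identically zero. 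The remaining equations of E22--E55 are compatibility conditions; their validity follows from the existence of the intertwiner $K$ in Section~\ref{sec:ori}, which produces a simultaneous solution of all intertwining relations and hence of all of E22--E55.

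The main obstacle I anticipate is the polynomiality claim. The coefficients in E22--E55 take the schematic form $1 - q^{2m} x^{s_1} y^{s_2} z^{s_3} w^{s_4}$, and solving any given recursion for a single $Q_{b,c}$ a priori introduces such factors in the denominator. Two routes are available. Route (a), intrinsic: show that, at each inductive step, the chosen equation has a leading coefficient that is either a unit or divides the right-hand side, yielding a polynomial quotient; this requires careful case analysis of vanishing loci. Route (b), extrinsic: invoke $K^{a,b,c,d}_{i,j,k,l} \in q^\eta \Z[q^2]$ from \cite{KO1} together with a degree bound on $Q_{b,c}$ read off from (\ref{mess}) or from the support of the weight constraints, and conclude that any rational function in $(x,y,z,w)$ taking the required values on the infinite grid of $(q^{4i},q^{2j},q^{4k},q^{2l})$ must already be polynomial. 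Route (b) is quicker but relies on the previous formula; route (a) gives a self-contained proof and is the natural target.
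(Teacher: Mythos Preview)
Your overall strategy matches the paper's: substitute (\ref{Kq}) into $\langle 22\rangle$--$\langle 55\rangle$ to obtain E22--E55, invoke the unique existence of the intertwiner $K$ for the unique existence of $\{Q_{b,c}\}$, and then argue polynomiality by a recursion in $(b,c)$.

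The one place you stay vague is precisely where the paper supplies the concrete idea, and your expectation there is slightly off. No \emph{single} equation Eij furnishes a recursion that lowers $(b,c)$: E22 and E23 each contain both $Q_{b,c+1}$ and $Q_{b+1,c}$, while equations such as E24 or E53 keep $(b,c)$ fixed and only shift $(x,y,z,w)$. The paper's move is to take a linear combination of E22 and E23 so as to eliminate $Q_{b,c+1}$ (respectively $Q_{b+1,c}$); after a shift of indices this yields an expression for $Q_{b,c}$ as a sum of five values of $Q_{b-1,c}$ (respectively $Q_{b,c-1}$) at shifted arguments, with polynomial coefficients and leading coefficient $1$ on the left. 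This is your Route~(a), but made explicit: once those two recursions are in hand, polynomiality follows by separate induction on $b$ and on $c$, with no denominator to clear and no appeal to the formula from \cite{KO1}. Your Route~(b) would also close the argument, but the paper does not need it.
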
 
\begin{proof}
The intertwining relations (\ref{reK}) for 
$K$ are listed in $\langle 22 \rangle - \langle 55 \rangle$ 
in Appendix \ref{app:de}.
Substituting (\ref{Kq}) into them one finds that they are equivalent to 
E22--E55 as illustrated there along $\langle24\rangle$. 
The normalization condition $K^{0,0,0,0}_{0,0,0,0}=1$ 
also matches $Q_{0,0}(x,y,z,w)=1$.
Thus there is a unique family of functions 
$\{Q_{b,c}(x,y,z,w)\mid (b,c) \in \Z^2_{\ge 0}\}$ satisfying E22--E55 
due to the unique existence of the intertwiner $K$.
Combining E22 and E23 in Appendix \ref{app:de}, one can derive
\begin{align}
Q_{b,c}(x,y,z,w)&= 
w y (z-1) q^{4 b+8 c-4}Q_{b-1,c}(x,y,q^{-4}z,w)\nonumber\\
&+w x (y-1) y z q^{4b+4c-4}Q_{b-1,c}(x,q^{-2}y,z,w)\nonumber\\
&+(w-1) (y-1) q^{6 b+8 c-6}Q_{b-1,c}(x,q^{-2}y,z,q^{-2}w)\nonumber\\
&+w (x-1) y (z-1) q^{4 b+8c-4}Q_{b-1,c}(q^{-4}x,q^2y,q^{-4}z,w)\nonumber\\
&+(w-1) (x-1) yq^{6 b+8 c-6}Q_{b-1,c}(q^{-4}x,y,z,q^{-2}w), \label{rec1}\\
Q_{b,c}(x,y,z,w)&= 
-w^2 y (z-1) z q^{4 b+8 c-8} Q_{b,c-1}(x,y,q^{-4}z,w)\nonumber\\
&+w x (y-1) z q^{4 b+4 c-6}(q^{2 (b+2c)}-q^2 w y z) Q_{b,c-1}(x,q^{-2}y,z,w)\nonumber\\
&-(w-1) w (y-1) z q^{6 b+8 c-8}Q_{b,c-1}(x,q^{-2}y,z,q^{-2}w)\nonumber\\
&+w(x-1) (z-1) q^{4 b+8 c-10}(q^{2 (b+2 c)}-q^2 w yz) 
Q_{b,c-1}(q^{-4}x,q^2 y,q^{-4}z,w)\nonumber\\
&+(w-1) (x-1) q^{6 b+8 c-10}(q^{2 (b+2c)}-q^2 w y z)
Q_{b,c-1}(q^{-4}x,y,z,q^{-2}w).\label{rec2}
\end{align}
By induction on $b$ and $c$ 
they tell that 
$Q_{b,c}(x,y,z,w)$ is a polynomial in $x,y,z,w$.
\end{proof}

The power $\phi_K$ in (\ref{phiQ}) is invariant under the 
exchange $(a,b,c,d) \leftrightarrow (i,j,k,l)$.
Therefore (\ref{kud}) implies another general formula:
\begin{align}
K^{a,b,c,d}_{i,j,k,l} = \delta^{a+b+c}_{i+j+k}\delta^{b+2c+d}_{j+2k+l}\;
q^{\phi_K-\varphi_{j,k}}
\binom{l}{b,d}_{\!\!q^2}\binom{i}{a,c}_{\!\!q^4}
Q_{j,k}(q^{4a},q^{2b},q^{4c},q^{2d}).\label{Kqq}
\end{align}

If one switches from $q$ to $p=q^{-1}$ and introduces the functions of 
$p,x,y,z,w$ by 
\begin{align*}
\hat{Q}_{b,c}(x,y,z,w) = p^{\varphi_{b,c}}Q_{b,c}(x,y,z,w)|_{q\rightarrow p^{-1}},
\end{align*}
the recursion relations (\ref{rec1}) and (\ref{rec2}) slightly simplify as
\begin{align}
\hat{Q}_{b,c}(x,y,z,w)
&=w y (z-1) p^{2 b-2} \hat{Q}_{b-1,c}(x,y,p^4 z,w)\nonumber\\
&+w x (y-1) y z p^{2 b+4c-2} \hat{Q}_{b-1,c}(x,p^2 y,z,w)\nonumber\\
&+(w-1) (y-1) \hat{Q}_{b-1,c}(x,p^2 y,z,p^2 w)\nonumber\\
&+w (x-1) y (z-1) p^{2 b-2}\hat{Q}_{b-1,c}(p^4 x,p^{-2}y,p^4 z,w)\nonumber\\
&+(w-1) (x-1) y \hat{Q}_{b-1,c}(p^4 x,y,z,p^2 w),\label{rec3}\\
\hat{Q}_{b,c}(x,y,z,w) 
&= -w^2 y (z-1) z p^{4 b+4 c-2} \hat{Q}_{b,c-1}(x,y,p^4 z,w)\nonumber\\
&-w x (y-1) z p^{2 b+4 c-6} (w y z p^{2 (b+2 c)}-p^2)
\hat{Q}_{b,c-1}(x,p^2 y,z,w)\nonumber\\
&-(w-1) w (y-1) z p^{2 b+4 c-2} \hat{Q}_{b,c-1}(x,p^2 y,z,p^2 w)\nonumber\\
&-w (x-1) (z-1) p^{2 b-2}(w y z p^{2 (b+2 c)}-p^2) \hat{Q}_{b,c-1}(p^4 x,p^{-2}y,p^4 z,w)\nonumber\\
&+(w-1) (x-1) (1-w y z p^{2 (b+2c-1)}) \hat{Q}_{b,c-1}(p^4 x,y,z,p^2 w).\label{rec4}
\end{align}
Here are some examples of 
$Q_{b,c}(x,y,z,w)$'s with small $b,c$:
\begin{align*}
Q_{1,0}(x,y,z,w)&= w x y^2 z-w-x y+1,\\
Q_{0,1}(x,y,z,w)&= q^2 (w x y z-w z-x+1)-w z \left(w x y^2 z-w-x y+1\right),\\
Q_{2,0}(x,y,z,w)&= q^6 (w-1) (x y-1)+q^4 \left(-w^2xy^2 z+w^2+w x y-w+x y^2-x y\right)\\
&-q^2 x y^2 (w x y z-w z-x+1)+w x y^2 z \left(w
   x y^2 z-w-x y+1\right)\\
&-q^4(w-w^2+xy-xyw-xy^2+xy^2zw^2),\\
Q_{1,1}(x,y,z,w)&=q^{10} (w-1) (x-1)-q^8 (w-1) w z (x y-1)\\
&+q^6 \left(-w^2 xy z+w^2 z-w x^2 y^2 z+2 w x y z-w z+x^2 y-x y\right)\\
&+q^4 wz \left(w^2 x y^2 z-w^2+w x^2 y^3 z-w x y^2 z-w x y+w-x^2 y^2+x y\right)\\
&+q^2 w x y^2 z (w x y z-w z-x+1)-w^2 x y^2z^2 \left(w x y^2 z-w-x y+1\right).
\end{align*}
One notices that these $Q_{b,c}(x,y,z,w)$ are polynomials also in $q^2$.
To show it in general we introduce 
\begin{align}\label{rstu}
{\mathscr S}_{b,c}=\{(r,s,t,u) \in \Z_{\ge 0}^4 
\mid \min(u-t, 2r-s, b-s+2t-u, c-r+s-t)\ge 0\},
\end{align}
which is a finite subset of 
$\{(r,s,t,u) \in \Z_{\ge 0}^4 \mid s/2\le r \le b+c,\, t\le u \le b+2c\}$.
\begin{proposition}\label{pr:tec}
(i) Let $Q_{b,c}(x,y,z,w) = \sum_{r,s,t,u} D^{b,c}_{r,s,t,u}x^ry^sz^tw^u$
with $D^{b,c}_{r,s,t,u}$ independent of $x,y,z,w$.
Then $D^{b,c}_{r,s,t,u} = 0$ unless $(r,s,t,u) \in {\mathscr S}_{b,c}$. 
(ii) $Q_{b,c}(x,y,z,w) \in \Z[q^2,x,y,z,w]$.
(iii) $q^{-\varphi_{b,c}}Q_{b,c}(x,y,z,w) \in \Z[q^{-2},x,y,z,w]$.
\end{proposition}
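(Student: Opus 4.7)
\emph{Proof proposal.} We prove the three parts simultaneously by induction on $b+c$. The base case $(b,c)=(0,0)$ is immediate since $Q_{0,0}=1$, $(0,0,0,0)\in\mathscr{S}_{0,0}$, and $\varphi_{0,0}=0$. For the inductive step we apply (\ref{rec1}) when $b\ge 1$, and (\ref{rec2}) when $b=0$ and $c\ge 1$.

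Writing $Q_{b-1,c}=\sum D^{b-1,c}_{r',s',t',u'}\,x^{r'}y^{s'}z^{t'}w^{u'}$ (and similarly for $Q_{b,c-1}$), expand each of the five terms on the right-hand side by distributing the polynomial prefactors $wy(z-1)$, $wxy(y-1)z$, $(w-1)(y-1)$, $w(x-1)y(z-1)$, $(w-1)(x-1)y$. Each piece sends the source monomial $(r',s',t',u')$ to a shifted target $(r,s,t,u)=(r'+\delta_r,s'+\delta_s,t'+\delta_t,u'+\delta_u)$, multiplied by an explicit power of $q$ coming from the substitutions $x\mapsto q^{-4}x$, $y\mapsto q^{\pm 2}y$, $z\mapsto q^{-4}z$, $w\mapsto q^{-2}w$ together with the outer $q$-prefactor.

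For (ii) and (iii), the inductive support bound $(r',s',t',u')\in\mathscr{S}_{b-1,c}$---in particular $r'\le b-1+c$, $s'\le 2r'$, $t'\le u'\le b-1+2c$---ensures that the negative contributions $-4r'$, $-4t'$, $-2u'$ introduced by the substitutions are dominated by the outer prefactors $q^{4b+8c-4}$ and $q^{6b+8c-6}$. A direct count then shows that the accumulated exponent of $q$ on every target monomial lies in $\{0,2,\dots,\varphi_{b,c}\}$, giving (ii). Applying the parallel induction to $\hat{Q}_{b,c}$ via (\ref{rec3})--(\ref{rec4}) and invoking $q^{-\varphi_{b,c}}Q_{b,c}(x,y,z,w)=\hat{Q}_{b,c}(x,y,z,w)\rvert_{p=q^{-1}}$ then yields (iii).

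The main obstacle is (i): individual terms of (\ref{rec1})/(\ref{rec2}) do produce monomials outside $\mathscr{S}_{b,c}$; for example, applying (\ref{rec1}) at $(b,c)=(1,0)$ to $Q_{0,0}=1$, Term 1 yields $(0,1,1,1)\notin\mathscr{S}_{1,0}$. Every such offending contribution must cancel among the five terms. The key mechanism is that offending targets come in matched pairs whose $q$-prefactors coincide exactly when the defining inequality of $\mathscr{S}_{b-1,c}$ controlling the violated inequality of $\mathscr{S}_{b,c}$ is tight on the source. For instance, when $s'=2r'$ (so the target violates $2r\ge s$), Terms 1 and 4 both supply the shift $(0,1,1,1)$ through their common factor $wy(z-1)$, and their remaining substitution factors $q^{4b+8c-4-4t'}$ and $q^{4b+8c-4-4r'+2s'-4t'}$ become equal, yielding opposite-sign contributions that cancel. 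A case analysis organized by the four defining inequalities of $\mathscr{S}_{b,c}$, each matched with the tight inequality of $\mathscr{S}_{b-1,c}$ that forces the pairing, completes the verification of (i); an analogous bookkeeping for (\ref{rec2}) handles the $b=0$, $c\ge 1$ step.
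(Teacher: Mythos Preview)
Your proposal is correct and follows essentially the same approach as the paper: induction via the recursions (\ref{rec1})--(\ref{rec2}), with part (i) established by pairwise cancellation of offending monomials when a boundary inequality of $\mathscr{S}_{b-1,c}$ (resp.\ $\mathscr{S}_{b,c-1}$) is tight, and parts (ii)--(iii) deduced from the resulting support bound via a term-by-term exponent check (the paper likewise handles (iii) through $\hat{Q}_{b,c}$). The differences are only organizational---you induct on $b+c$ using one recursion at a time and prove (i)--(iii) simultaneously, whereas the paper treats (\ref{rec1}) and (\ref{rec2}) in parallel and proves the parts sequentially; also note that your illustrative $2r\ge s$ cancellation between Terms~1 and~4 is only half the story, as Terms~3 and~5 produce further violating targets that must be paired separately (the paper exhibits both pairings explicitly for the analogous condition in (\ref{rec2})).
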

\begin{proof}
(i) By induction on $b$ and $c$,  
it suffices to show that 
all the monomials $x^ry^sz^tw^u$ which survive possible 
cancellations in the right hand sides of (\ref{rec1}) and 
(\ref{rec2}) satisfy $(r,s,t,u) \in {\mathscr S}_{b,c}$ 
assuming $D^{b-1,c}_{r,s,t,u}=0$ unless $(r,s,t,u) \in {\mathscr S}_{b-1,c}$
and  
$D^{b,c-1}_{r,s,t,u}=0$ unless $(r,s,t,u) \in {\mathscr S}_{b,c-1}$,
respectively.
We illustrate the procedure for (\ref{rec2}). 
The treatment of (\ref{rec1})  is completely similar.
First consider the condition $u-t\ge 0$ in (\ref{rstu}).
The right side of (\ref{rec2}) contains no prefactor $w^{u'}z^{t'}$ 
such that $u'-t'<0$, therefore this condition is trivially satisfied.
Second consider the condition $2r-s\ge 0$ in (\ref{rstu}).
The terms in the right side of (\ref{rec2}) that apparently break it are
\begin{align*}
&-w^2 y (z-1) z q^{4 b+8 c-8} Q_{b,c-1}(x,y,q^{-4}z,w)|_{2r=s}\\
&-(w-1) w yz q^{6 b+8 c-8}Q_{b,c-1}(x,q^{-2}y,z,q^{-2}w)|_{2r=s}\\
&-w(z-1) q^{4 b+8 c-10}(-q^2 w yz) 
Q_{b,c-1}(q^{-4}x,q^2 y,q^{-4}z,w)|_{2r=s}\\
&-(w-1)q^{6 b+8 c-10}(-q^2 w y z)
Q_{b,c-1}(q^{-4}x,y,z,q^{-2}w)|_{2r=s},
\end{align*}
where $|_{2r=s}$ means the contribution of the monomials 
$x^ry^sz^tw^u$ such that $2r=s$.
This vanishes due to 
$Q_{b,c-1}(x,y,q^{-4}z,w)|_{2r=s}
= Q_{b,c-1}(q^{-4}x,q^2 y,q^{-4}z,w)|_{2r=s}$
and 
$Q_{b,c-1}(x,q^{-2}y,z,q^{-2}w)|_{2r=s}
=Q_{b,c-1}(q^{-4}x,y,z,q^{-2}w)|_{2r=s}$.
Similarly the proof of the third and the fourth conditions 
$b-s+2t-u\ge 0$ and $c-r+s-t\ge 0$ in (\ref{rstu}) reduce to checking
\begin{align*}
&\phantom{+}w^2 yz q^{4 b+8 c-8} Q_{b,c-1}(x,y,q^{-4}z,w)|_{b=s-2t+u}\\
&-w^2yz q^{6 b+8 c-8}Q_{b,c-1}(x,q^{-2}y,z,q^{-2}w)|_{b=s-2t+u}\\
&-w(x-1)q^{4 b+8 c-10}(q^{2 (b+2 c)}-q^2 w yz) 
Q_{b,c-1}(q^{-4}x,q^2 y,q^{-4}z,w)|_{b=s-2t+u}\\
&+w(x-1)q^{6 b+8 c-10}(q^{2 (b+2c)}-q^2 w y z)
Q_{b,c-1}(q^{-4}x,y,z,q^{-2}w)|_{b=s-2t+u}=0,\\
&-wxz q^{4 b+4 c-6}(q^{2 (b+2c)}-q^2 w y z)
Q_{b,c-1}(x,q^{-2}y,z,w)|_{c-1=r-s+t}\\
&+wxz q^{4 b+8 c-10}(q^{2 (b+2 c)}-q^2 w yz) 
Q_{b,c-1}(q^{-4}x,q^2 y,q^{-4}z,w)|_{c-1=r-s+t}=0.
\end{align*}
Likewise the previous case, it is easy to see that these terms pairwise cancel.

(ii) By induction on $b$ and $c$, 
it suffices to show $Q_{b,c}(x,y,z,w) \in \Z[q^2,x,y,z,w]$ 
from (\ref{rec1}) and (\ref{rec2}) by 
assuming $Q_{b-1,c}(x,y,z,w) \in \Z[q^2,x,y,z,w]$
and  $Q_{b,c-1}(x,y,z,w) \in \Z[q^2,x,y,z,w]$, respectively.
This can easily be checked term by term 
in the right hand sides. 
For instance consider the contribution in (\ref{rec1}) whose $q$-dependent part is
$q^{4 b+8c-4}Q_{b-1,c}(q^{-4}x,q^2y,q^{-4}z,w)$.
It consists of the monomials of the form 
$q^{4 b+8c-4-4r+2s-4t}D^{b-1,c}_{r,s,t,u}x^ry^sz^tw^u$ with
$D^{b-1,c}_{r,s,t,u} \in \Z[q^2]$ by the assumption.
From (i) we know $D^{b-1,c}_{r,s,t,u}=0$ unless 
$(r,s,t,u) \in {\mathscr S}_{b-1,c}$. 
This ensures $4 b+8c-4-4r+2s-4t\in 2\Z_{\ge 0}$ if 
$D^{b-1,c}_{r,s,t,u}\neq 0$.

(iii) Likewise (ii) one can show $\hat{Q}_{b,c}(x,y,z,w) \in \Z[p^2,x,y,z,w]$ 
using (\ref{rec3}) and (\ref{rec4}).
\end{proof}

By a similar inductive argument it is easy to show 
\begin{proposition}\label{pr:m}
\begin{align*}
&\lim_{q\rightarrow 0} Q_{b,c}(x,y,z,w) 
= (-1)^c(xy^2)^{b+c-1}(zw)^{b+2c-1}Q_{1,0}(x,y,z,w),\\
&\lim_{q\rightarrow \infty} q^{-\varphi_{b,c}}Q_{b,c}(x,y,z,w)
= 1-xy^{\delta_c^0}-wz^{\delta_b^0}+xw y^{\delta_c^0+\delta_{b+c}^1}
z^{\delta_b^0+\delta_b^1\delta_c^0},\\
&Q_{b,c}(x,1,1,w) = (-1)^cx^{b+c}w^{b+2c}(x^{-1}; q^4)_{b+c}
(w^{-1}; q^2)_{b+2c},\\
&Q_{b,c}(x,y,1,1) = (-1)^c (xy^2)^{b+c}(y^{-1};q^2)_{b+2c},\\
&Q_{b,c}(1,1,z,w)= (-1)^c(zw)^{b+2c}(z^{-1};q^4)_{b+c},
\end{align*}
where $(b,c) \neq (0,0)$ in the first two relations.
\end{proposition}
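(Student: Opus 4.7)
The plan is to prove all five statements by induction on $(b,c)$ using the recursions (\ref{rec1}) and (\ref{rec2}) (and their hatted companions (\ref{rec3})--(\ref{rec4}) for the $q\to\infty$ assertion), with Proposition~\ref{pr:tec} supplying the polynomial structure that makes the two limits in the last statements well-defined as, respectively, the lowest and highest $q^2$-coefficient of $Q_{b,c}(x,y,z,w)$. The base of the induction is $Q_{1,0}=wxy^2z-w-xy+1$, which by inspection satisfies all five claimed identities at $(b,c)=(1,0)$.

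For the three product-formula identities (third, fourth, and fifth displays), I would substitute $y=z=1$, $x=y=1$, or $z=w=1$ directly into (\ref{rec1}) and (\ref{rec2}). In each specialization four of the five summands on the right carry a killing factor among $(x-1),(y-1),(z-1),(w-1)$, leaving a single-term recursion. The induction step then reduces to the elementary $q$-Pochhammer identity $(a;q)_{n}=(1-a)(aq;q)_{n-1}$ applied with the appropriate shift in the variable, together with a direct check that the $q$-prefactor of the surviving term in (\ref{rec1}) or (\ref{rec2}) exactly absorbs the mismatch of exponents in the proposed formula.

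For the $q\to 0$ limit, set $X_{b,c}:=(-1)^c(xy^2)^{b+c-1}(zw)^{b+2c-1}Q_{1,0}(x,y,z,w)$. A direct check gives the ratios $X_{b,c}/X_{b-1,c}=xy^2zw$ for $b\ge 1$ and $X_{b,c}/X_{b,c-1}=-xy^2(zw)^2$ for $c\ge 1$. I would then expand the right-hand sides of (\ref{rec1}) and (\ref{rec2}) using the monomial decomposition $Q_{b',c'}=\sum_{(r,s,t,u)\in\mathscr{S}_{b',c'}}D^{b',c'}_{r,s,t,u}x^ry^sz^tw^u$ from Proposition~\ref{pr:tec}(i), and extract the coefficient of $q^0$. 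Because each substitution $x\mapsto q^{-4}x$, $y\mapsto q^{\pm 2}y$, $z\mapsto q^{-4}z$, $w\mapsto q^{-2}w$ introduces prescribed negative powers of $q$, the $q^0$-extraction localises onto the unique vertex of the polytope $\mathscr{S}_{b-1,c}$ (resp.\ $\mathscr{S}_{b,c-1}$) attaining the extremal value of the relevant linear form, and the resulting contribution reproduces the ratios above.

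The $q\to\infty$ limit proceeds in the same spirit via (\ref{rec3})--(\ref{rec4}), extracting now the top $p^{\,2}$-coefficient of $\hat Q_{b,c}$. The main obstacle is that the proposed limit depends on the stratum of $(b,c)$ through the Kronecker deltas $\delta_{b}^{0},\delta_{c}^{0},\delta_{b}^{1},\delta_{b+c}^{1}$: one obtains five distinct leading expressions according as $(b,c)\in\{(1,0)\}$, $\{b\ge 2,c=0\}$, $\{(0,1)\}$, $\{b=0,c\ge 2\}$, or $\{b\ge 1,c\ge 1\}$, the last stabilising to the uniform $(1-x)(1-w)$. I would therefore treat these strata as separate base cases and run the inductive step only in the interior stratum $\{b\ge 1,c\ge 1\}$; the boundary transitions $b=0\to b=1$ and $c=0\to c=1$ require matching the changing leading monomial against (\ref{rec3})--(\ref{rec4}) in a case-by-case fashion. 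This case analysis is the only nontrivial bookkeeping, but once a stratum is fixed the surviving top-degree monomial of $\hat Q_{b-1,c}$ or $\hat Q_{b,c-1}$ under each variable substitution is unique, and the recursive check becomes routine.
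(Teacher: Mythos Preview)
Your proposal is correct and follows precisely the route the paper indicates: the paper's entire proof is the single remark ``By a similar inductive argument it is easy to show'', referring to induction on $(b,c)$ via (\ref{rec1})--(\ref{rec2}) (and (\ref{rec3})--(\ref{rec4})) together with Proposition~\ref{pr:tec}, which is exactly what you outline. Your description of the three product formulas (single surviving term in each recursion) and of the $q\to 0$ and $q\to\infty$ limits (vertex localisation on $\mathscr{S}_{b',c'}$, with stratified base cases for the latter) supplies the details the paper omits.
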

Thus the power $\varphi_{b,c}$ (\ref{pbc}) gives  
the {\em exact} degree of $Q_{b,c}(x,y,z,w)$ in $q$. 
The third result for instance reflects 
$K^{a,b,c,d}_{i,0,0,l} = 0$ 
unless $(a,d) = (i-b-c,l-b-2c) \in \Z_{\ge 0}^2$.

Now we present an explicit form of $Q_{b,c}(x,y,z,w)$.
\begin{theorem}\label{th:main2}
The following formula is valid:
\begin{align}
&Q_{b,c}(x,y,z,w) =q^{\varphi_{b,c}}
\sum_{(r,s,t,u)\in {\mathscr S}_{b,c}}(-1)^{r+u}q^{\phi_Q-\psi_{r,s}}
C^{b,c}_{r,s,t,u}x^ry^sz^tw^u, \label{qc}\\
&C^{b,c}_{r,s,t,u} = 
\frac{\binom{b,u-t}{b+2t-s-u,2r-s}_{q^2}}
{(q^4)_r(q^4)_{u-t}(q^4)_{c-r+s-t}}(-1)^s q^{\psi_{r,s}}
\sum_{(\alpha,\beta,\gamma) \in \Z^3_{\ge 0}}(-1)^{\beta+\gamma}q^{\phi_{C}}
\,\Xi^{}_{\alpha,\beta,\gamma},\label{cxi}\\
&\Xi^{}_{\alpha,\beta,\gamma} =
\binom{b-s+t-\alpha,2r-s+\beta}{\alpha,\beta,\gamma,u-t-\alpha,t-\beta,
b-s-\alpha+\beta,s-\beta-\gamma}_{\!\!q^2}
\binom{c+s-r-\beta,c+\gamma}{c-r+\gamma}_{\!\!q^4},\label{Xi}\\
&\phi_Q = (s-2t+u)^2+2r(r+2t+1)-(2b-1)(s+u)-4c(r+t),\\
&\phi_C = \alpha(\alpha+1+2t)+\beta(\beta-1-2\alpha+2b-4r)
+\gamma(\gamma-1-4r),\\
&\psi_{r,s} = s(4r-s+1). \label{psirs}
\end{align}
\end{theorem}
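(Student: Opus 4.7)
The strategy is to verify (\ref{qc})--(\ref{Xi}) by specializing to the lattice $(x,y,z,w) = (q^{4i},q^{2j},q^{4k},q^{2l})$ with $(i,j,k,l) \in \Z_{\ge 0}^4$ and comparing against the existing closed form (\ref{mess}) via the identity (\ref{Kq}). By Proposition \ref{pr:tec}(i) the polynomial $Q_{b,c}$ is supported on the finite set ${\mathscr S}_{b,c}$, and for generic $q$ the monomials $q^{4ir+2js+4kt+2lu}$ are linearly independent as functions of $(i,j,k,l)$ for distinct $(r,s,t,u) \in {\mathscr S}_{b,c}$; matching coefficients of these monomials on both sides therefore pins down each $D^{b,c}_{r,s,t,u}$.

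Concretely, one first substitutes $x=q^{4i}, y=q^{2j}, z=q^{4k}, w=q^{2l}$ into (\ref{qc})--(\ref{Xi}) and factors out $q^{4ir+2js+4kt+2lu}$, then evaluates the RHS of (\ref{Kq}) using (\ref{mess}) with $a=i+j+k-b-c$ and $d=j+2k+l-b-2c$ forced by weight conservation. The key step is a change of variables in which the seven-fold summation over $(r,s,t,u,\alpha,\beta,\gamma)$ of (\ref{qc}) collapses, via the two Kronecker delta constraints in (\ref{mess}) together with matching of the $q$-exponents of $q^{i},q^{j},q^{k},q^{l}$, onto the four-fold summation $(\alpha,\beta,\gamma,\lambda)$ of (\ref{mess}). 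The natural dictionary is that $(r,s,t,u)$ record the powers of $x,y,z,w$ while the inner $(\alpha,\beta,\gamma)$ of (\ref{Xi}) correspond to three of the four $(\alpha,\beta,\gamma,\lambda)$ in (\ref{mess}); the factor $(-1)^s q^{\psi_{r,s}}$ inserted in (\ref{cxi}) is precisely the device that repackages a quadratic cross term so that the $q^2$-binomials of (\ref{Xi}) absorb the corresponding $q$-factorials of (\ref{mess}) after this affine change of variables.

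The main obstacle will be the matching of quadratic phases and signs: the composite exponent $\phi_K + \phi_Q + \phi_C - \psi_{r,s} - \varphi_{b,c}$ must reconstitute $\Phi$ of (\ref{mess}) after completing the square in the affine coordinates, and the overall sign $(-1)^{r+u+s+\beta+\gamma}$ must reproduce $(-1)^{r_0}$; both checks are tedious but mechanical, with the support property $1/(q)_n = 0$ for $n < 0$ automatically truncating the seven-fold sum to ${\mathscr S}_{b,c}$. As consistency checks at each stage one can compare against the explicit $Q_{1,0}, Q_{0,1}, Q_{2,0}, Q_{1,1}$ listed just before Proposition \ref{pr:tec} and against the specializations collected in Proposition \ref{pr:m}. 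An alternative route, bypassing (\ref{mess}), is to verify (\ref{qc}) inductively by substitution into the recursions (\ref{rec1})--(\ref{rec2}); this trades the combinatorial match for a contiguity identity on the inner sum $\Xi^{}_{\alpha,\beta,\gamma}$, which is of a type familiar from $q$-hypergeometric theory and can be handled by shifting summation indices.
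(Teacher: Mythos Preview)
Your main route—comparing the seven-fold sum from (\ref{qc})--(\ref{Xi}) against the four-fold sum (\ref{mess}) after specialization—has a genuine gap. The two Kronecker deltas in (\ref{mess}) constrain the \emph{external} indices $a,d$ in terms of $b,c,i,j,k,l$; they do nothing to reduce the number of internal summation variables. Likewise, ``matching the $q$-exponents of $q^i,q^j,q^k,q^l$'' does not collapse seven sums to four: once you specialize $(x,y,z,w)=(q^{4i},q^{2j},q^{4k},q^{2l})$ the indices $r,s,t,u$ are absorbed into a single $q$-power and can no longer be separated from the $\alpha,\beta,\gamma$ dependence. There is no affine change of variables that identifies the two multi-sums term by term; equating them would require a nontrivial $q$-hypergeometric identity that you have not supplied. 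The ``dictionary'' you sketch is therefore not a proof outline but a restatement of what has to be shown.

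The paper's proof is entirely different and does not touch (\ref{mess}). It \emph{derives} (\ref{qc})--(\ref{Xi}) by solving the $q$-difference equations E22--E55 directly. Writing the unknown coefficient as in (\ref{CA}), the argument proceeds in five reductions: (i) a combination of E35 and E45 gives a two-term recursion in $u$ whose solution introduces the $\alpha$-sum and reduces to $u=t$; (ii) E24 gives a recursion in $t$ whose solution introduces the $\beta$-sum and reduces to $t=0$; (iii) E34 at $t=u=0$ shows $A^{b,c}_{r,s,0,0}$ is independent of $b$ for $b>s$; (iv) E32 at $t=u=0$ together with (iii) gives a recursion lowering $b$, whose solution introduces the $\gamma$-sum and reduces to $A^{0,c}_{r,0,0,0}$; (v) E53 and E23/E54 fix this last quantity as $(q^2)_{2r}/(q^4)_r$. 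Each of $\alpha,\beta,\gamma$ thus arises organically from iterating a specific first-order recursion, and the constraints defining ${\mathscr S}_{b,c}$ enter as boundary conditions guaranteeing the reductions terminate. Your alternative route via (\ref{rec1})--(\ref{rec2}) is closer in spirit, but verifying the full triple sum against those five-term recursions is far harder than the paper's approach of building the formula up from simpler two- and three-term recursions extracted from individual intertwining relations.
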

In view of the support property of the symbols in (\ref{cbdef}),
the sum (\ref{cxi}) is limited to those $\alpha,\beta,\gamma$
such that all the lower entries in (\ref{Xi}) are nonnegative,
which also ensures that all the upper entries are so.
Thus it ranges over a finite subset of 
$\{(\alpha,\beta,\gamma) \in \Z_{\ge 0}^3\mid 
\alpha \le u-t,\; \beta \le t, \; \gamma \le s\}$. 
The dependence of 
$\Xi_{\alpha,\beta,\gamma}$ on $(b,c,r,s,t,u)$
has been suppressed in the notation for simplicity.

From Proposition \ref{pr:tec} (ii), we know 
$q^{\varphi_{b,c}+\phi_Q-\psi_{r,s}}C^{b,c}_{r,s,t,u} \in \Z[q^2]$.
Actually computer experiments suggest the following conjecture 
\begin{align}
C^{b,c}_{r,s,t,u} \in \Z[q^2],\quad
\lim_{q \rightarrow 0} C^{b,c}_{r,s,t,u} = 1.
\end{align}
In general $\Xi_{\alpha,\beta,\gamma}$ is not necessarily a polynomial 
but a rational function of $q^2$.
In the special case $b=0$ or $c=0$, $C^{b,c}_{r,s,t,u}$ admits a simple formula:
\begin{align}
C^{b,0}_{r,s,t,u}&= \binom{b}{u,b+2t-s-u,2r-s}_{\!\!q^2}
\binom{u}{t,u-t,s-r-t}_{\!\!q^4},\\
C^{0,c}_{r,s,t,u}&=\binom{2r}{s,2t-s-u,2r-s}_{\!\!q^2}
\binom{c}{r,u-t,c-r+s-t}_{\!\!q^4}.
\end{align}
The latter is equivalent to 
$A^{0,c}_{r,s,t,u}= (q^2)_t(q^2)_{2r}/(q^4)_r$ in (\ref{CA}), 
which can be verified by following Steps (i)--(v) in Section \ref{app:sol}.

\begin{example}\label{ex:k}
The following is the list of the nonzero $K^{3,1,0,2}_{i,j,k,l}$:
\begin{align*}
K^{3,1,0,2}_{1,3,0,0}&= -q^6(1-q+q^2)(1+q+q^2),\\
K^{3,1,0,2}_{2,1,1,0}&= -q^{10}(1-q+q^2)(1+q+q^2),\\
K^{3,1,0,2}_{2,2,0,1}&= (1+q^2)(1 - q^2 + q^4 - q^6 + q^8 - q^{10} - q^{14}),\\
K^{3,1,0,2}_{3,0,1,1}&= q^6(1+q^2)(1 - q^2 + q^4 - q^6 + q^8 - q^{10} - q^{14}),\\
K^{3,1,0,2}_{3,1,0,2}&= q^6(1 + q^2 - q^{14} - q^{16} - q^{18}),\\
K^{3,1,0,2}_{4,0,0,3}&=q^{14}(1-q+q^2)(1+q+q^2)(1-q^{16}).
\end{align*}
According to (\ref{Kq}) they are expressed by various special values of  
$Q_{1,0}(x,y,z,w)= w x y^2 z-w-x y+1$ as
\begin{align*}
K^{3,1,0,2}_{1,3,0,0}&= \frac{Q_{1,0}(q^4,q^6,1,1)}{q^4(1-q^2)},\;
K^{3,1,0,2}_{2,1,1,0}=\frac{Q_{1,0}(q^8,q^2,q^4,1)}{1-q^2},\;
K^{3,1,0,2}_{2,2,0,1}=\frac{Q_{1,0}(q^8,q^4,1,q^2)}{1-q^2},\\
K^{3,1,0,2}_{3,0,1,1}&= \frac{q^6 Q_{1,0}(q^{12},1,q^4,q^2)}{1-q^2},\;
K^{3,1,0,2}_{3,1,0,2}= \frac{q^6 Q_{1,0}(q^{12},q^2,1,q^4)}{1-q^2},\;
K^{3,1,0,2}_{4,0,0,3}=\frac{q^{14} Q_{1,0}(q^{16},1,1,q^6)}{1-q^2}.
\end{align*}
On the other hand according to (\ref{Kqq}) they are also  
expressed in terms of $Q_{b,c}(q^{12},q^2,1,q^4)$ with various $b,c$.
For instance one has
\begin{align*}
K^{3,1,0,2}_{2,1,1,0}=
\frac{q^{-10}Q_{1,1}(q^{12},q^2,1,q^4)}{(1-q^2)^2(1-q^4)(1-q^{12})},
\quad
K^{3,1,0,2}_{3,0,1,1}= 
\frac{q^4 Q_{0,1}(q^{12},q^2,1,q^4)}{(1-q^2)(1-q^4)}.
\end{align*} 
\end{example}

\subsection{Proof of Theorem \ref{th:main2}}\label{app:sol}

We find it convenient to introduce $A^{b,c}_{r,s,t,u}$ by 
\begin{align}\label{CA}
C^{b,c}_{r,s,t,u} = \binom{b}{s,t,b-s+2t-u,2r-s}_{\!\!q^2}
\binom{c}{u-t,c-r+s-t}_{\!\!q^4}A^{b,c}_{r,s,t,u}.
\end{align}
Then substitution of (\ref{qc}) into the difference equations 
E22--E55 in 
Appendix \ref{app:de} leads to the recursion relations for $A^{b,c}_{r,s,t,u}$.
In what follows we outline how they can be solved in Steps (i)--(v) to yield 
(\ref{qc})--(\ref{psirs}).
An important feature in the derivation is the boundary condition 
$A^{b,c}_{r,s,t,u}=0$ unless $(r,s,t,u) \in {\mathscr S}_{b,c}$
which stems from Proposition \ref{pr:tec} (i).

(i) Combining E35 and E45, one gets
\begin{align*}
A^{b,c}_{r,s,t,u} = A^{b,c}_{r,s,t,u-1}+q^{2u}A^{b-1,c}_{r,s,t,u-1}\quad (t<u),
\end{align*}
which leads to 
\begin{align*}
A^{b,c}_{r,s,t,u} = \sum_{\alpha=0}^{u-t}q^{\alpha(\alpha+2t+1)}
\binom{u-t}{\alpha,u-t-\alpha}_{\!\!q^2}A^{b-\alpha,c}_{r,s,t,t}.
\end{align*}
Henceforth we concentrate on $A^{b,c}_{r,s,t,t}$ with 
$\min(b-s+t,c+s-r-t,2r-s) \ge 0$.

(ii) From E24, one gets
\begin{align*}
A^{b,c}_{r,s,t,t} = q^{2(b-s+t)}(1-q^{2s})A^{b,c}_{r,s-1,t-1,t-1}+
(1-q^{2(b-s+t)})A^{b,c}_{r,s,t-1,t-1}\quad
(b-s+t\ge 0),
\end{align*}
which leads to 
\begin{align*}
A^{b,c}_{r,s,t,t}=\sum_{\beta=(t-s)_+}^{t-(s-b)_+}
q^{2(t-\beta)(b-s+t-\beta)}
\binom{s,t,b-s+t}{\beta,t-\beta,s-t+\beta,b-s+t-\beta}_{\!\!q^2}
A^{b,c}_{r,s-t+\beta,0,0},
\end{align*}
where $(x)_+ = \max(x,0)$.
Under the assumption 
$\min(b-s+t,c+s-r-t,2r-s) \ge 0$, 
all the summands $A^{b,c}_{r,m, 0,0}$ appearing here 
satisfy the condition $(r,m,0,0) \in {\mathscr S}_{b,c}$ in (\ref{rstu}). 
Henceforth we concentrate on $A^{b,c}_{r,s,0,0}$ with 
$\min(b-s,c+s-r,2r-s)\ge 0$.

(iii) By setting $t=u=0$ in the recursion relation of 
$A^{b,c}_{r,s,t,u}$ derived from E34, one gets
\begin{align}\label{AA}
A^{b,c}_{r,s,0,0} = A^{b-1,c}_{r,s,0,0}\quad (b>s).
\end{align}
Henceforth we concentrate on $A^{b,c}_{r,b,0,0}$ with 
$\min(b+c-r,2r-b)\ge 0$.

(iv) By setting $t=u=0$ in the recursion relation of 
$A^{b,c}_{r,s,t,u}$ derived from E32, one gets
\begin{align*}
(1-q^{4+4c})A^{b-1,c+1}_{r,s-1,0,0}
-q^{2+4r-2s}(1-q^{4(c-r+s)})A^{b,c}_{r,s-1,0,0}
-(1-q^{2(2r-s+1)})A^{b,c}_{r,s,0,0}=0.
\end{align*}
Setting $s=b$ and applying (\ref{AA}) further, one finds
\begin{align*}
(1-q^{4+4c})A^{b-1,c+1}_{r,b-1,0,0}
-q^{2+4r-2b}(1-q^{4(b+c-r)})A^{b-1,c}_{r,b-1,0,0}
-(1-q^{2(2r-b+1)})A^{b,c}_{r,b,0,0}=0
\end{align*}
for $2r\ge b,\; b+c\ge r$. 
This allows one to decrease $b$, leading to
\begin{align*}
A^{b,c}_{r,b,0,0} = \sum_{\gamma=0}^{b-(r-c)_+}
(-1)^\gamma q^{\gamma(\gamma+1+4r-2b)}
\binom{b,2r-b}{2r,\gamma,b-\gamma}_{\!\!q^2}
\binom{b+c-r, b+c-\gamma}{c, b+c-r-\gamma}_{\!\!q^4}
A^{0,b+c-\gamma}_{r,0,0,0}.
\end{align*}
Henceforth we concentrate on $A^{0,c}_{r,0,0,0}$ with $c\ge r$.

(v) By setting $b=s=t=u=0$ in the recursion relation of 
$A^{b,c}_{r,s,t,u}$ derived from E53, one gets
$A^{0,c}_{r,0,0,0}=(1-q^{4r-2})A^{0,c}_{r-1,0,0,0}$.
Thus $A^{0,c}_{r,0,0,0} = \frac{(q^2)_{2r}}{(q^4)_r}A^{0,c}_{0,0,0,0}$.
By setting $b=r=s=t=u=0$ in the recursion relation of 
$A^{b,c}_{r,s,t,u}$ derived from E23 or E54, one gets
$A^{0,c}_{0,0,0,0}=A^{0,c-1}_{0,0,0,0}$.
From $Q_{0,0}(x,y,w,z)=1$ it follows that $A^{0,0}_{0,0,0,0}=1$.
Therefore $A^{0,c}_{r,0,0,0} = \frac{(q^2)_{2r}}{(q^4)_r}\; (c\ge r)$.

Synthesizing the results in Steps (i)--(v)  one obtains a formula for
$A^{b,c}_{r,s,t,u}$ as a triple sum with respect to $(\alpha, \beta, \gamma)$
over a finite subset of $\Z_{\ge 0}^3$.
Express the ratio of the 
$q$-factorials contained in the summand by using the symbol (\ref{cbdef}). 
Then a little inspection shows that the sum can actually be relaxed to 
$(\alpha, \beta, \gamma) \in \Z_{\ge 0}^3$ due to its support property
mentioned after (\ref{cbdef}).
Together with (\ref{CA}) one arrives at (\ref{cxi})--(\ref{psirs}).
This completes the proof of Theorem \ref{th:main2}.

\section{Analogous result on 3D $R$}\label{sec:R}
Our result may be viewed as a generalization of the analogous 
fact for the 3D $R$ \cite{BS,BMS}. 
We review it here with a few supplements for comparison, 
as it is a closely related object associated with the 
quantized algebra of functions $A_q(A_2)$ \cite{KV,KO1} and 
constitute the companion in the 3D reflection equation (\ref{a:3dref}).
As an application the expression of the 3D $R$ (\ref{ko})
announced in \cite[eq. (9)]{KO3} 
is derived here for the first time.
See \cite{BS,BMS,KO1} for more aspects.

The 3D $R$ is the linear operator 
on ${\mathcal F}_q^{\otimes 3}$ whose 
parameter (except $q$) free part is characterized,  
up to a normalization, by 
\begin{equation}\label{recR}
\begin{split}
&R(\apm \otimes \ok \otimes \ichi) = 
(\apm \otimes  \ichi \otimes \ok + \ok \otimes \apm \otimes \amp )R,\\
&R(\ichi \otimes \ok \otimes \apm) = 
(\ok \otimes  \ichi \otimes \apm + \amp \otimes \apm \otimes \ok)R,\\
&R(\ichi \otimes \apm \otimes \ichi) = 
(\apm \otimes \ichi \otimes \apm - q\,\ok \otimes \apm \otimes \ok)R,\\
&[R,\ok \otimes \ok \otimes \ichi] = [R, \ichi\otimes \ok \otimes \ok] = 0.
\end{split}
\end{equation}
They follow either as the intertwining relations for the 
irreducible $A_q(A_2)$ modules \cite{KV} 
or from a quantum geometry consideration \cite{BS,BMS}.
Actually there should be {\em nine} intertwining relations in total
reflecting the $3 \times 3$ matrix nature of $A_q(A_2)$.
The last one is given by 
\begin{align}\label{t22}
R(\ap\otimes \am\otimes \ap-q\,\ok \otimes \ichi \otimes \ok)
=(\am\otimes \ap\otimes \am-q\,\ok \otimes \ichi \otimes \ok)R.
\end{align}
Define the matrix elements of the $R$ by 
$R(|i\rangle \otimes |j\rangle \otimes |k\rangle) = \sum_{a,b,c}R^{a,b,c}_{i,j,k} 
|a\rangle \otimes |b\rangle \otimes |c\rangle$.
We adopt the normalization $R^{0,0,0}_{0,0,0}=1$.
Setting 
$R^{a,b,c}_{i,j,k} = \delta^{a+b}_{i+j}\delta^{b+c}_{j+k}\;
q^{(a-j)(c-j)}P_b(q^{2i},q^{2j},q^{2k})/(q^2)_b$ \cite{BS},
the relations 
(\ref{recR}) and (\ref{t22}) 
are translated into the $q$-difference equations:
\begin{align}
&P_b(q^2x,y,z)-P_b(x,y,z)
-q^{2-2b}x(1-q^{2b})(1-q^{2-2b}yz)P_{b-1}(x,y,z)=0,\label{42}\\
&P_b(x,y,q^2z)-P_b(x,y,z)
-q^{2-2b}z(1-q^{2b})(1-q^{2-2b}xy)P_{b-1}(x,y,z)=0,\\
&(1-x)P_b(q^{-2}x,y,z)-q^{-2b}z(1-q^{-2b}xy)P_b(x,y,z)
-P_{b+1}(x,y,z)=0,\label{44}\\
&(1-z)P_b(x,y,q^{-2}z)-q^{-2b}x(1-q^{-2b}yz)P_b(x,y,z)
-P_{b+1}(x,y,z)=0,\label{45}\\
&P_b(x,q^2y,z)-P_b(x,y,z)+q^{4-4b}xyz(1-q^{2b})P_{b-1}(x,y,z)=0,\\
&yP_{b+1}(x,y,z)+(1-y)P_b(x,q^{-2}y,z)
-(1-q^{-2b}xy)(1-q^{-2b}yz)P_b(x,y,z)=0,\\
&(y-q^{2b})P_b(x,y,z)+(1-y)P_b(q^2x,q^{-2}y,q^2z)
=(1-q^{2b})(1-q^{2-2b}xy)(1-q^{2-2b}yz)P_{b-1}(x,y,z)
\end{align}
and the condition $P_0(x,y,z)=1$.
It is known that $R=R^{-1}$ \cite[Prop. 2.4]{KO1}.
Applying this to (\ref{recR}) and (\ref{t22}) one can extract another set of
$q$-difference equations:
\begin{align}
&P_b(x,y,z)-q^{-2b}zP_b(q^2x,y,z)-(1-z)P_b(x,q^2y,q^{-2}z)=0,\\
&P_b(x,y,z)-q^{-2b}xP_b(x,y,q^2z)-(1-x)P_b(q^{-2}x,q^2y,z)=0,\\
&q^{-2b}x(1-y)P_b(x,q^{-2}y,q^2z)+(1-x)P_b(q^{-2}x,y,z)
-(1-q^{-2b}xy)P_b(x,y,z)=0,\\
&q^{-2b}z(1-y)P_b(q^2x,q^{-2}y,z)+(1-z)P_b(x,y,q^{-2}z)
-(1-q^{-2b}yz)P_b(x,y,z)=0,\\
&(1-q^{2b})P_{b-1}(x,y,z)-P_b(q^2x,y,q^2z)+q^{2b}P_b(x,q^2y,z)=0,\\
&q^{-2b}xz(1-y)P_b(x,q^{-2}y,z)-(1-x)(1-z)P_b(q^{-2}x,y,q^{-2}z)
+P_{b+1}(x,y,z)=0,\label{BS}\\
&P_{b+1}(x,y,z)-(1-x)(1-z)P_b(q^{-2}x,q^2y,q^{-2}z)
-xzq^{-4b}(y-q^{2b})P_b(x,y,z)=0.\label{p22}
\end{align}
The recursion (\ref{BS}), which was adopted as the defining relation of 
$P_b(x,y,z)$ in \cite{BS}, 
is a member of these compatible system of difference equations.
Any one of the recursions (\ref{44}), (\ref{45}) or (\ref{p22})
also determines $P_b(x,y,z)$ uniquely as a polynomial in $x,y,z$.
(More precisely $P_b(x,y,z) \in q^{-2b(b-1)}\Z[q^2,x,y,z]$ holds.)
The original problem (\ref{recR}) is symmetric 
with respect to the interchange of the first and third components.
Accordingly $P_b(x,y,z)=P_b(z,y,x)$ holds and there are 
four pairs of relations in (\ref{42})--(\ref{p22}) connected by this symmetry.
The solution admits an explicit formula \cite{BMS}
\begin{align}\label{qhg}
P_b(x,y,z) &= (q^{2-2b}z;q^2)_b\,
{}_2\phi_1(q^{-2b},q^{2-2b}yz,q^{2-2b}z; q^2,q^2x)
\end{align}
in terms of the $q$-hypergeometric series
${}_2\phi_1(a,b,c; q, z) = \sum_{n\ge 0}
\frac{(a; q)_n(b; q)_n}{(q; q)_n(c; q)_n}z^n,$
which is actually terminating in (\ref{qhg}) due to the entry $q^{-2b}$.
It is most easily established from (\ref{45}) by gathering the terms
in powers of $x$.

Another interesting result is the integral formula \cite[eq. (104)]{S}:
\begin{align}\label{if}
P_b(x,y,z) = q^{-b(b-1)}(q^2)_b \oint\frac{du}{2\pi {\mathrm i}u^{b+1}}
\frac{(-q^{2-2b}xyzu;q^2)_\infty(-u;q^2)_\infty}
{(-xu;q^2)_\infty(-zu;q^2)_\infty},
\end{align}
where the integral (\ref{if}) encircles $u=0$ anti-clockwise 
picking up the residue.
Equivalently the generating series is factorized as
\begin{align}\label{gs}
\sum_{b \ge 0} \frac{q^{b(b-1)}u^b}{(q^2)_b}
P_b(x,q^{2b-2}y,z) = \frac{(-xyzu;q^2)_\infty(-u;q^2)_\infty}
{(-xu;q^2)_\infty(-zu;q^2)_\infty}.
\end{align}
Substitution of (\ref{if}) 
into (\ref{42})--(\ref{p22}) gives rise to two situations.
In the simple case the integrands just sum up to zero 
under an appropriate rescaling of $u$.
The other case requires a slight maneuver.
Let us illustrate it along (\ref{BS}) as an example.
After the substitution of (\ref{if}) 
and replacement of $u$ by $q^2u$ in the second term,
one is left to show 
\begin{align*}
\oint
\frac{du(-q^{-2b}xyzu;q^2)_\infty(-q^2u;q^2)_\infty}
{u^{b+2}(-xu;q^2)_\infty(-zu;q^2)_\infty}
\left(xz(1-y)u(1+u)-(1-x)(1-z)u+(1-q^{2b+2})(1+u)\right)=0.
\end{align*}
By setting $f(u) =(-q^{-2-2b}xyzu;q^2)_\infty(-u;q^2)_\infty
/((-xu;q^2)_\infty(-zu;q^2)_\infty)$,
this is identified with the
identity $0=\oint \frac{du}{u^{b+1}}(f(q^2u)-q^{2b+2}f(u))$.
All the relations (\ref{42})--(\ref{p22}) can be verified in a similar manner.

Due to (\ref{if}) matrix elements of the 3D $R$ are expressed as \cite{BMS}
\begin{align*}
R^{a,b,c}_{i,j,k} = \delta^{a+b}_{i+j}\delta^{b+c}_{j+k}\,q^{ik+b}
\oint\frac{du}{2\pi {\mathrm i}u^{b+1}}
\frac{(-q^{2+a+c}u;q^2)_\infty(-q^{-i-k}u;q^2)_\infty}
{(-q^{a-c}u;q^2)_\infty(-q^{c-a}u;q^2)_\infty}.
\end{align*}
Note that the ratio of the infinite products equals 
$(-q^{-i-k}u;q^2)_i/(-q^{c-a}u;q^2)_{a+1}$ because of 
$a-c=i-k$.
Applying the standard expansion to it and collecting the coefficients of $u^b$,
one gets
\begin{align}\label{ko}
R^{a,b,c}_{i,j,k} = \delta^{a+b}_{i+j}\delta^{b+c}_{j+k}
\sum_{\lambda, \mu}(-1)^\lambda
q^{ik+b+\lambda(c-a)+\mu(\mu-i-k-1)}
\binom{\lambda+a}{\lambda, a}_{\!q^2}\binom{i}{\mu,i-\mu}_{\!q^2}
\end{align}
summed over $\lambda, \mu \in \Z_{\ge 0}$ under  the constraint 
$\lambda+\mu=b$. 
This was announced in \cite[eq. (9)]{KO3}. 

\section{Concluding remarks}\label{sec:sum}

In this paper we have introduced a family of polynomials 
$Q_{b,c}(x,y,z,w)\in \Z[q^2,x,y,z,w]$ characterized by the recursion relations 
(\ref{rec1}) and  (\ref{rec2}), 
or more generally E22--E55 in Appendix \ref{app:de}.
They form a compatible set of $q$-difference equations  
associated with the `maximal' representation $\pi_{1212} \simeq \pi_{2121}$ 
of the quantized algebra of functions $A_q(C_2)$.
The variables $x,y,z,w$ correspond to the four positive roots of $C_2$ 
as indicated in (\ref{proot}) and (\ref{eke}).
We have shown some basic properties of the polynomials in Proposition \ref{pr:tec}
and Proposition \ref{pr:m}.
The $q$-difference equations are solved in Section \ref{app:sol} and 
a new formula of the 3D $K$, the solution \cite{KO1}  
to the 3D reflection equation (\ref{a:3dref}),  
is obtained in Theorem \ref{th:main1} and Theorem \ref{th:main2}.
We have also included an expanded review on the 
closely related result on the 3D $R$ in Section \ref{sec:R}. 
It is an interesting question if the family of polynomials $Q_{b,c}(x,y,z,w)$
admit a factorizable generating series analogous to (\ref{gs}).
Another challenge will be to establish a similar polynomial formula 
for the intertwiner of $A_q(G_2)$ \cite[Sec.~4.4]{KOY} 
for which {\em no} general expression has been constructed. 

We remark that the system of intertwining relations 
$\langle 22 \rangle - \langle 55 \rangle$ in Appendix \ref{app:de} 
is {\em autonomous} in the sense that the apparent $q$ 
can completely be removed by replacing 
$\ok, \OK$ by
$q^{-1/2}\ok$ and 
$q^{-1}\OK$, respectively.
The new ones act on the Fock space by
$\ok|m\rangle = 
q^{m+\frac{1}{2}}|m\rangle,\,
\OK|m\rangle = 
q^{2m+1}|m\rangle$.
The resulting relations 
$\ok \,{{\rm{\bf a}}^{\!\pm} }
=q^{\pm 1}{{\rm{\bf a}}^{\!\pm} }\ok,\,
{{\rm{\bf a}}^{\!\pm} }{{\rm{\bf a}}^{\!\mp} }
= 1-q^{\mp1}\ok^2$
and 
$\OK {{\rm{\bf A}}^{\!\pm} }
=q^{\pm 2}{{\rm{\bf A}}^{\!\pm} }\OK,\,
{{\rm{\bf A}}^{\!\pm} }{{\rm{\bf A}}^{\!\mp} }
= 1-q^{\mp2}\OK^2$
can be realized in terms of the Weyl pairs
$\langle \ok, {\rm{\bf w}}\rangle$ and 
$\langle \OK, {\rm{\bf W}}\rangle$
satisfying  
$\ok {\rm{\bf w}} = q {\rm{\bf w}}\ok$ and
$\OK {\rm{\bf W}} = q^2 {\rm{\bf W}}\OK$ by
\begin{align*}
&\ap = (1-q^{-1}\ok^2)^{1/2} {\rm{\bf w}},\quad\;\;\;\,
\am = (1-q\ok^2)^{1/2} {\rm{\bf w}}^{-1},\\
&\Ap = (1-q^{-2}\OK^2)^{1/2} {\rm{\bf W}},\quad
\Am = (1-q^2\OK^2)^{1/2} {\rm{\bf W}}^{-1}.
\end{align*}
It is an interesting problem to seek a solution $K$ to the 
intertwining relations $\langle 22 \rangle - \langle 55 \rangle$ 
in Appendix \ref{app:de}
for the canonical representation of the Weyl pairs
(called non-compact representation of the $q$-oscillator algebra \cite{Sc}).
Especially the autonomous feature mentioned above 
indicates a possible extension to the modular double setting 
where not only $q$ but also its modular dual $\tilde{q}$ 
($(\log q)(\log \tilde{q}) = \mathrm{const}$) enters everywhere
compatibly via the Faddeev non-compact quantum dilogarithm \cite{BMS,KOS}.
Such an analysis effectively poses 
an analytic continuation of the eigenvalues of $\ok$ and $\OK$
away from $q^{\frac{1}{2}+\Z_{\ge 0}}$ and
$q^{1+2\Z_{\ge 0}}$.
The result in this paper may be viewed as a first step in this direction.

\section*{Acknowledgments}
The authors thank Sergey Sergeev for communication on reference.
This work is supported by 
Grants-in-Aid for Scientific Research No.~24540203
from JSPS.

\appendix
\section{Difference equations for $Q_{b,c}(x,y,z,w)$}\label{app:de}

Let $\langle\mathrm{ i j} \rangle$ be the 
intertwining relation for (\ref{reK}) $K$ (rather than ${\mathscr K}$) 
with the choice $f = t_{i-1,j-1}$.
They all become independent of the parameters 
in (\ref{pi31}) and (\ref{pi33}) other than $q$.
Explicitly they read as follows (cf. \cite[App.A]{KO1}):
\begin{alignat*}{2}
&\langle 2 2 \rangle: & \quad &
[ \ichi \!\otimes\! \am \!\otimes\! \ichi \!\otimes\! \am
- q \ichi \!\otimes\! \ok \!\otimes\! \Am \!\otimes\! \ok, \,K] = 0,\\
&\langle 2 3 \rangle: & \quad &
(\ichi \!\otimes\! \am \!\otimes\! \ichi \!\otimes\! \ok
+ \ichi \!\otimes\! \ok \!\otimes\! \Am \!\otimes\! \ap)K\\
&&& =
K(\Am \!\otimes\! \ap \!\otimes\! \Am \!\otimes\! \ok+
\Am \!\otimes\! \ok \!\otimes\! \ichi \!\otimes\! \am -q^2
\OK \!\otimes\! \am \!\otimes\! \OK \!\otimes\! \ok),\\
&\langle 2 4 \rangle: & \quad &
(\ichi \!\otimes\! \ok \!\otimes\! \OK \!\otimes\! \am) K = 
K(
\Ap \!\otimes\! \am \!\otimes\! \OK \!\otimes\! \ok + 
\OK \!\otimes\! \ap \!\otimes\! \Am \!\otimes\! \ok +
\OK \!\otimes\! \ok \!\otimes\! \ichi \!\otimes\! \am),\\
&\langle 2 5 \rangle: & \quad &
[\ichi \!\otimes\! \ok \!\otimes\! \OK \!\otimes\! \ok, K] = 0,\\
&\langle 3 2 \rangle: & \quad &
(\Am \!\otimes\! \ap \!\otimes\! \Am \!\otimes\! \ok
+\Am \!\otimes\! \ok \!\otimes\! \ichi \!\otimes\! \am -q^2
\OK \!\otimes\! \am \!\otimes\! \OK \!\otimes\! \ok)K\\
&&& =K( \ichi \!\otimes\! \am \!\otimes\! \ichi \!\otimes\! \ok +
\ichi \!\otimes\! \ok \!\otimes\! \Am \!\otimes\! \ap),\\
&\langle 3 3 \rangle: & \quad &
[\Am \!\otimes\! \ap \!\otimes\! \Am \!\otimes\! \ap -q
\Am \!\otimes\! \ok \!\otimes\! \ichi \!\otimes\! \ok -q^2
\OK \!\otimes\! \am \!\otimes\! \OK \!\otimes\! \ap, \,K] = 0,\\
&\langle 3 4 \rangle: & \quad &
(\Am \!\otimes\! \ap \!\otimes\! \OK \!\otimes\! \am + 
\OK \!\otimes\! \am \!\otimes\! \Ap \!\otimes\! \am - q
\OK \!\otimes\! \ok \!\otimes\! \ichi \!\otimes\! \ok)K\\
&&&=
K(\Ap \!\otimes\! \am \!\otimes\! \OK \!\otimes\! \ap+
\OK \!\otimes\! \ap \!\otimes\! \Am \!\otimes\! \ap -q
\OK \!\otimes\! \ok \!\otimes\! \ichi \!\otimes\! \ok),\\
&\langle 3 5 \rangle: & \quad &
(\Am \!\otimes\! \ap \!\otimes\! \OK \!\otimes\! \ok+
\OK \!\otimes\! \am \!\otimes\! \Ap \!\otimes\! \ok+
\OK \!\otimes\! \ok \!\otimes\! \ichi \!\otimes\! \ap )K=
K(
\ichi \!\otimes\! \ok \!\otimes\! \OK \!\otimes\! \ap),\\
&\langle 4 2 \rangle: & \quad &
(\Ap \!\otimes\!\am \!\otimes\! \OK \!\otimes\! \ok + 
\OK \!\otimes\! \ap \!\otimes\! \Am \!\otimes\! \ok +
\OK \!\otimes\! \ok \!\otimes\! \ichi \!\otimes\! \am)K
=
K(\ichi \!\otimes\! \ok \!\otimes\! \OK \!\otimes\! \am),\\
&\langle 4 3 \rangle: & \quad &
(\Ap \!\otimes\! \am \!\otimes\! \OK \!\otimes\! \ap+
\OK \!\otimes\! \ap \!\otimes\! \Am \!\otimes\! \ap - q
\OK \!\otimes\! \ok \!\otimes\! \ichi \!\otimes\! \ok)K\\
&&& =
K(\Am \!\otimes\! \ap \!\otimes\! \OK \!\otimes\! \am +
\OK \!\otimes\! \am \!\otimes\! \Ap \!\otimes\! \am -q
\OK \!\otimes\! \ok \!\otimes\! \ichi \!\otimes\! \ok),\\
&\langle 4 4 \rangle: & \quad &
[\Ap \!\otimes\! \am \!\otimes\! \Ap \!\otimes\! \am -q 
\Ap \!\otimes\! \ok \!\otimes\! \ichi \!\otimes\! \ok -q^2
\OK \!\otimes\! \ap \!\otimes\! \OK \!\otimes\! \am,\, K]=0,\\
&\langle 4 5 \rangle: & \quad &
(\Ap \!\otimes\! \am \!\otimes\! \Ap \!\otimes\! \ok +
\Ap \!\otimes\! \ok \!\otimes\! \ichi \!\otimes\! \ap -q^2
\OK \!\otimes\! \ap \!\otimes\! \OK \!\otimes\! \ok)K\\
&&&= K(\ichi \!\otimes\! \ap \!\otimes\! \ichi \!\otimes\! \ok +
\ichi \!\otimes\! \ok \!\otimes\! \Ap \!\otimes\! \am),\\
&\langle 5 2 \rangle: & \quad &
[\ichi \!\otimes\! \ok \!\otimes\! \OK \!\otimes\! \ok,\, K] = 0
\quad 
(\text{same as $\langle 2 5 \rangle$}),\\
&\langle 5 3 \rangle: & \quad &
(\ichi \!\otimes\! \ok \!\otimes\! \OK \!\otimes\! \ap)K = 
K(\Am \!\otimes\! \ap \!\otimes\! \OK \!\otimes\! \ok +
\OK \!\otimes\! \am \!\otimes\! \Ap \!\otimes\! \ok+
\OK \!\otimes\! \ok \!\otimes\! \ichi \!\otimes\! \ap),\\
&\langle 5 4 \rangle: & \quad &
(\ichi \!\otimes\! \ap \!\otimes\! \ichi \!\otimes\! \ok + 
\ichi \!\otimes\! \ok \!\otimes\! \Ap \!\otimes\! \am)K\\
&&&=K(\Ap \!\otimes\! \am \!\otimes\! \Ap \!\otimes\! \ok+
\Ap \!\otimes\! \ok \!\otimes\! \ichi \!\otimes\! \ap - q^2
\OK \!\otimes\! \ap \!\otimes\! \OK \!\otimes\! \ok),\\
&\langle 5 5 \rangle: & \quad &
[ \ichi \!\otimes\! \ap \!\otimes\! \ichi \!\otimes\! \ap -q
\ichi \!\otimes\! \ok \!\otimes\! \Ap \!\otimes\! \ok,\, K] = 0.
\end{alignat*}
The relations $\langle25\rangle$ and $\langle52\rangle$ 
imply the factor 
$\delta^{a+b+c}_{i+j+k}\delta^{b+2c+d}_{j+2k+l}$ in (\ref{Kq}).
The other ones are translated into difference equations of 
$Q_{b,c}(x,y,z,w)$.
For instance consider the equation $\langle 24 \rangle$.
In terms of the matrix elements defined by (\ref{Kact}) it reads
\begin{align*}
&q^{b+2c}(1-q^{2d+2})K^{a,b,c,d+1}_{i,j,k,l} \\
&= q^{2k+l}(1-q^{2j})K^{a,b,c,d}_{i+1,j-1,k,l} 
+ q^{2i+l}(1-q^{4k})K^{a,b,c,d}_{i,j+1,k-1,l}
+q^{2i+j}(1-q^{2l})K^{a,b,c,d}_{i,j,k,l-1}.
\end{align*}
Substituting (\ref{Kq}) into this and setting 
$(x,y,z,w) = (q^{4i},q^{2j},q^{4k},q^{2l})$ 
one gets the difference equation $\text{E}24$ given below.
Similarly, the equation $\langle ij \rangle$ 
is cast into Eij below.
\begin{align*}
\text{E}22:\;\;
& y q^{-2 (4 b+6 c+1)} Q_{b,c+1}(x,y,z,w)
+q^{-2 (4 b+6 c+1)}(w y z-q^{2 b+4 c+2}) Q_{b+1,c}(x,y,z,w)\\
&+(w-1) (y-1) Q_{b,c}(x,q^{-2}y,z,q^{-2}w)
+w y (z-1) q^{-2 b} Q_{b,c}(x,y,q^{-4}z,w)=0,\\ 
\text{E}23:\;\; 
& -q^{-2 (4 b+6 c+1)} Q_{b,c+1}(x,y,z,w)
-w z q^{-2 (4 b+6 c+1)} Q_{b+1,c}(x,y,z,w) \\
&+w x (y-1) z q^{-2 (b+2 c)} Q_{b,c}(x,q^{-2}y,z,w)
+(w-1) (x-1) Q_{b,c}(q^{-4}x,y,z,q^{-2}w)\\
&+w (x-1) (z-1) q^{-2 b}Q_{b,c}(q^{-4}x,q^2 y,q^{-4}z,w)=0,\\ 
\text{E}24:\;\; 
&(w y z q^{-2 (b+2 c)}-1) Q_{b,c}(x,y,z,w)
+(1-w) Q_{b,c}(x,y,z,q^{-2}w)\\
&-w (z-1) q^{-2 b}Q_{b,c}(x,q^2 y,q^{-4}z,w)
-w (y-1) z q^{-2 (b+2 c)} Q_{b,c}(q^4 x,q^{-2}y,z,w)=0,
\end{align*}
\begin{align*}
\text{E}32:\;\; 
&q^{-6 b-8 c}(q^{2 (b+2 c)}-w y z)(q^{4 (b+c)}-x y^2 z) Q_{b,c}(x,y,z,w)\\
&-y(q^{2 b}-1) q^{-8 (b+c)}(q^{4 (b+c)}-x y^2 z) Q_{b-1,c+1}(x,y,z,w)\\
&-y z q^{-8 (b+c)} Q_{b+1,c}(x,y,z,w)+(y-1)Q_{b,c}(x,q^{-2}y,z,w)\\
&+y (z-1) q^{-2 b} Q_{b,c}(x,y,q^{-4}z,q^2 w)=0,\\ 
\text{E}33:\;\; 
&w z q^{-6 b-8 c}(q^{4 (b+c)}-x y^2 z) Q_{b,c}(x,y,z,w)\\
&+(q^{2 b}-1) q^{-8 (b+c)}(q^{4 (b+c)}-x y^2 z)Q_{b-1,c+1}(x,y,z,w)
+z q^{-8 (b+c)}Q_{b+1,c}(x,y,z,w)\\
&+(x-1) Q_{b,c}(q^{-4}x,y,z,w)
+x (y-1) z q^{-2 (b+2 c)}Q_{b,c}(x,q^{-2}y,z,q^2 w)\\
&+(x-1) (z-1) q^{-2 b}Q_{b,c}(q^{-4}x,q^2 y,q^{-4}z,q^2 w)=0,\\ 
\text{E}34:\;\; 
&(y z q^{-2 (b+2 c)}-1)Q_{b,c}(x,y,z,w)\\
&+z(q^{4 c}-1) q^{-2 (b+2 c+1)}(q^{2 (b+2 c)}-q^2 w y z)Q_{b+1,c-1}(x,y,z,w)\\
&+(q^{2 b}-1) q^{-2 b}(q^{2 (b+2 c-1)}-w y z)(q^{4 (b+c-1)}-x y^2 z)
Q_{b-1,c}(x,y,z,w)\\
&-(z-1) q^{-2 b} Q_{b,c}(x,q^2 y,q^{-4}z,q^2 w)
-(y-1) z q^{-2 (b+2 c)} Q_{b,c}(q^4x,q^{-2}y,z,q^2 w)=0,\\ 
\text{E}35:\;\; 
&-w z(q^{4 c}-1) Q_{b+1,c-1}(x,y,z,w)-Q_{b,c}(x,y,z,q^2 w)\\
&-w(q^{2 b}-1) q^{4 c}(q^{4 (b+c-1)}-x y^2 z)Q_{b-1,c}(x,y,z,w)
+Q_{b,c}(x,y,z,w)=0,
\end{align*}
\begin{align*}
\text{E}42:\;\; 
&x y^2(q^{2 b}-1) Q_{b-1,c+1}(x,y,z,w)
+x y q^{2 b}(w y z-q^{2 (b+2 c)}) Q_{b,c}(x,y,z,w)\\
&-(w-1) q^{6 b+8 c}Q_{b,c}(x,y,z,q^{-2}w)-Q_{b+1,c}(x,y,z,w)=0,\\ 
\text{E}43:\;\; 
&x y^2(1-q^{2 b}) Q_{b-1,c+1}(x,y,z,w)
+w x y q^{2 b}(q^{2 (b+2 c)}-y z) Q_{b,c}(x,y,z,w)\\
&-(w-1) x (y-1) q^{6 b+4c} Q_{b,c}(x,q^{-2}y,q^4 z,q^{-2}w)\\
&-(w-1) (x-1) q^{6 b+8 c} Q_{b,c}(q^{-4}x,q^2y,z,q^{-2}w)
+Q_{b+1,c}(x,y,z,w)=0,\\ 
\text{E}44:\;\; 
&(q^{4 c}-1)(q^{2 (b+2 c-1)}-w y z) Q_{b+1,c-1}(x,y,z,w)
-w y q^{-2 b} Q_{b,c}(q^4 x,y,z,w)\\
&+x y^2(q^{2 b}-1) q^{4 c}(w y z-q^{2 (b+2 c-1)})Q_{b-1,c}(x,y,z,w)
+(w-1) q^{4 c} Q_{b,c}(x,q^2y,z,q^{-2}w)\\
&+(w-1)(y-1)Q_{b,c}(q^4 x,q^{-2}y,q^4 z,q^{-2}w)
+y Q_{b,c}(x,y,z,w)=0,\\
\text{E}45:\;\; 
&w x y^2 z(q^{2 b}-1) q^{4 c} Q_{b-1,c}(x,y,z,w)
-w z(q^{4 c}-1) Q_{b+1,c-1}(x,y,z,w)\\
&-w q^{-2 b}Q_{b,c}(x,q^2 y,z,w)
+(w-1) Q_{b,c}(x,y,q^4 z,q^{-2}w)+Q_{b,c}(x,y,z,w)=0,
\end{align*}
\begin{align*}
\text{E}53:\;\; 
&-x y q^{-2 (b+2 c)}Q_{b,c}(x,y,z,q^2 w)
+(x-1) Q_{b,c}(q^{-4}x,q^2 y,z,w)\\
&+x (y-1) q^{-4 c}Q_{b,c}(x,q^{-2}y,q^4 z,w)+Q_{b,c}(x,y,z,w)=0,\\ 
\text{E}54:\;\; 
&y(q^{2 b}-1) Q_{b-1,c}(x,y,z,w)
+q^{2 b}(q^{4 c}-1)(q^{2 (b+2 c-2)}-w y z)Q_{b,c-1}(x,y,z,w)\\
&-q^{-4 b-4 c+6} Q_{b,c}(x,q^2 y,z,w)
+y q^{-6 b-8 c+6} Q_{b,c}(q^4 x,y,z,q^2 w)\\
&-(y-1) q^{-4b-8 c+6} Q_{b,c}(q^4 x,q^{-2}y,q^4 z,w)=0,\\ 
\text{E}55:\;\; 
&-w z q^{2 b}(q^{4 c}-1)Q_{b,c-1}(x,y,z,w)
+(q^{2 b}-1) Q_{b-1,c}(x,y,z,w)\\
&-q^{-4 b-8 c+6}Q_{b,c}(x,y,q^4 z,w)+q^{-6 b-8 c+6} Q_{b,c}(x,q^2 y,z,q^2 w)=0.
\end{align*}

\end{document}